\documentclass[a4paper,english]{article}

\usepackage[UKenglish]{isodate}
\usepackage{fullpage}
\usepackage{hyperref}

\usepackage{enumitem}

\usepackage{tikz}
\usetikzlibrary{decorations.pathreplacing,arrows}

\usepackage{amsmath,amssymb,oubraces}
\newcommand*{\zz}{\mathbb{Z}}
\newcommand*{\nn}{\mathbb{N}}

\newcommand*{\bw}{\mathbf{w}}

\newcommand*{\cv}{\mathcal{V}}

\newcommand*{\nth}{\text{th}}

\DeclareMathOperator{\occ}{occ}

\newcounter{theorem}[section]

\usepackage{amsthm}
\theoremstyle{plain}
\newtheorem{thm}[theorem]{Theorem}
\newtheorem{lem}[theorem]{Lemma}
\newtheorem{pro}[theorem]{Proposition}
\newtheorem*{pro*}{Proposition}

\newtheorem{cor}[theorem]{Corollary}
\theoremstyle{definition}
\newtheorem{dfn}[theorem]{Definition}

\newtheorem*{exa*}{Example}
\newtheorem*{rmk*}{Remark}

\newcommand{\myqed}{\relax}

\title{Quasiperiods of biinfinite Sturmian words}

\author{Florian
  Barbero\footnote{\textsc{Lirmm}, University of Montpellier, 161 rue
    Ada, $34\,095$ Montpellier Cedex 5, France.} %
  \and %
  Guilhem Gamard\footnote{%
    National Research University Higher School of Economics, Russian Federation. %
    Moscow 125319, Kochnovsky proezd 3.
    This author has been funded by the Russian Academic Excellence Project '5-100'. %
    Corresponding author, e-mail address: \newline
    \href{mailto:guilhem.gamard@normale.fr}{guilhem.gamard@normale.fr}} %
  \and %
  Ana{\"e}l Grandjean\footnote{\textsc{Lacl}, University Paris-Est
    Cr{\'e}teil, 61 avenue du G\'en\'eral de Gaulle, 94010 Cr\'eteil Cedex, France.
  } %
}



\begin{document}

\maketitle

\begin{abstract}
  We study the notion of quasiperiodicity, in the sense of
  ``coverability'', for biinfinite words.  All previous work about
  quasiperiodicity focused on right infinite words, but the passage to
  the biinfinite case could help to prove stronger results about
  quasiperiods of Sturmian words. We demonstrate this by showing that
  all biinfinite Sturmian words have infinitely many quasiperiods,
  which is not quite (but almost) true in the right infinite case, and
  giving a characterization of those quasiperiods.

  The main difference between right infinite and the biinfinite words
  is that, in the latter case, we might have several quasiperiods of
  the same length. This is not possible with right infinite words
  because a quasiperiod has to be a prefix of the word. We study in
  depth the relations between quasiperiods of the same length in a
  given biinfinite quasiperiodic word. This study gives enough
  information to allow to determine the set of quasiperiods of an
  arbitrary word.
\end{abstract}

\section{Introduction}

A finite word $q$ is a \emph{quasiperiod} of a word $w$ if and only if
each position of $w$ is covered by an occurrence of $q$. A word $w$
with a quasiperiod $q \neq w$ is called \emph{quasiperiodic}. For
instance, $abaababaabaaba$ is quasiperiodic and has two quasiperiods:
$aba$ and $abaaba$. Likewise, an infinite word may have several, or
even infinitely many quasiperiods; in the latter case, we call it
\emph{multi-scale quasiperiodic}. The study of quasiperiodicity began
on finite words in the context of text
algorithms~\cite{ApostolicoEhrenfeucht1993Tcs,IliopoulosMouchard1999Jalc},
and was subsequently generalized to right infinite
words~\cite{GlenLeveRichomme2008Tcs,LeveRichomme2007Tcs,Marcus2004Beatcs},
to symbolic dynamical systems~\cite{MarcusMonteil2006Arxiv}, and to
two-dimensional words~\cite{GamardRichomme2015Lata} where it is a
special case of the tiling problem. Finally, a previous
article~\cite{GamardRichomme2016Mfcs} provided a method to determine
the set of quasiperiods of an arbitrary right infinite word. It also
characterized periodic words and standard Sturmian words in terms of
quasiperiods. This is interesting, because periodic words are the
simplest possible infinite words, and Sturmian words are a widely
studied
class~\cite{LeveRichomme2007Tcs,LothaireAlgebraic,MorseHedlundII}
which could be defined as the \emph{least complex non-periodic words}.
These results suggest that quasiperiodicity has some expressive power,
and that the set of quasiperiods is an interesting object to study in
order to get information about infinite words.

The current paper extends to the biinfinite case ($\zz$-words) some
results from~\cite{GamardRichomme2016Mfcs}. The motivations for this
are threefold.

In the two-dimensional case, quasiperiodic $\nn^2$-words and
$\zz^2$-words behave quite
differently~\cite{GamardRichomme2015Lata}. This difference is not
specific to the dimension $2$, so it seems natural to start by
understanding the differences in quasiperiodicity between $\nn$-words
and $\zz$-words.

Quasiperiodicity have been considered not only on infinite words, but
also on subshifts~\cite{MarcusMonteil2006Arxiv}.  However the shift
map does not preserve quasiperiodicity in the right infinite case and
this leads to annoying technicalities.  The biinfinite case is
sometimes considered more natural for subshifts because it turns the
shift map into a bijection. Moreover, it also turns the shift map into
a quasiperiodicity-preserving map, which makes the study of
quasiperiodic subshifts much more convenient.

Finally, a previous article~\cite{GamardRichomme2016Mfcs} gave a
characterization of standard Sturmian words in terms of
quasiperiods. Intuitively, the condition ``standard'' was only needed
because of problems at the origin. By moving to the biinfinite case,
we remove the origin so we can hope for a characterization of all
Sturmian words. (We did not achieve this yet, but it is a possible
continuation of our work.)

\smallskip

The current article makes a first step toward the resolution of these
questions: it generalizes the method to study the set of quasiperiods
of an arbitrary word from~\cite{GamardRichomme2016Mfcs} to the
biinfinite case.  This is not a trivial task because, by contrast with
the right infinite case, we might have several quasiperiods with the
same length. (In the right infinite case, all quasiperiods are
prefixes, thus there may be only one quasiperiod of a given length.)
Therefore we need to determine not only the lengths of the
quasiperiods, but also for each length which factors are quasiperiods
and which are not.

Many natural results about quasiperiodicity on $\nn$-words turned out
to be surprisingly difficult to generalize to $\zz$-words because of
this problem. In addition to show how to determine the set of
quasiperiods of an arbitrary $\zz$-word, we investigate the relations
existing between two quasiperiods of the same length inside a given
biinfinite word. More preciesly, we show that the following conditions
are decidable, given two words $q,r$ of the same length:
\begin{enumerate}[label={(\alph*)}] \itemsep=0pt
\item \label{item:exist}
 there exists a biinfinite word both $q$ and $r$-quasiperiodic;
\item \label{item:infinite}
  each $q$-quasiperiodic biinfinite word contains infinitely many
  occurrences of $r$;
\item \label{item:all}
  each $q$-quasiperiodic biinfinite word is also $r$-quasiperiodic;
\item \label{item:deriv} in any word with quasiperiods $q$ and $r$,
  the derivated sequences of $q$ and $r$ are equal.
\end{enumerate}
Derivated sequences are a tool previously used to build examples and
counter-examples of quasiperiodic words and to show independence
results~\cite{MarcusMonteil2006Arxiv}. A derivated sequence can be
thought as a normal form for quasiperiodic words. Intuitively, when
two derivated sequences are equal, the considered quasiperiods contain
the same information about $\bw$.

Finally, we give a complete description of the set of quasiperiods of
each biinfinite Sturmian word. In particular, we show that each
biinfinite Sturmian word has infinitely many quasiperiods. This
contrasts with the right infinite case, where two Sturmian words of
each slope have no quasiperiods.

\smallskip

\noindent
The paper is structured as follows.

In Section~\ref{sec:det}, we provide a method to study the
quasiperiods of an arbitrary biinfinite word, i.e., a description of
the set of quasiperiods of an arbitrary word.

In Section~\ref{sec:compat}, we define three relations over couples of
words: \emph{compatible}, \emph{definite}, and \emph{positive}. Those
relations are decidable by an algorithm. We show that the couple
$(q,r)$ is compatible if and only if there exists a biinfinite word
$\bw$ having both $q$ and $r$ as quasiperiods (Item~\ref{item:exist}
above). Moreover, the couple $(q,r)$ is definite and positive if and
only if all $q$-quasiperiodic words are also $r$-quasiperiodic
(Item~\ref{item:all}).

In Section~\ref{sec:deriv}, we show that the couple $(q,r)$ is
positive if and only if in any word $\bw$ which is both $q$ and
$r$-quasiperiodic, the derivated sequences along $q$ and $r$ are the
equal (Item~\ref{item:deriv}). We also prove that $(q,r)$ is definite
if and only if each $q$-quasiperiodic word contains infinitely many
copies of $r$ (Item~\ref{item:infinite}).

In Section~\ref{sec:sturm}, we determine the set of quasiperiods of
each biinfinite Sturmian word. In the process we show that all
biinfinite Sturmian words have infinitely many quasiperiods.

Finally in Section~\ref{sec:conclu}, we conclude with a few related
open questions and state our acknowledgements.

\noindent
Figure~\ref{fig:graph} below shows the implications proven in
Sections~\ref{sec:compat} and~\ref{sec:deriv}.

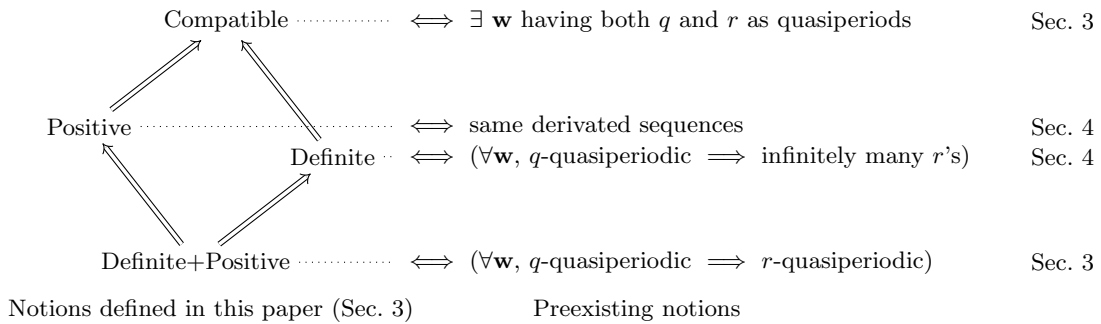
\begin{figure}[hbtp]
  \centering
  \begin{tikzpicture}[scale=0.8] \small
    \node (compat) at(0.25,+2) {Compatible};
    \node (defini) at(+2,-0.25) {Definite};
    \node (positi) at(-2,+0.25) {Positive};
    \node (defpos) at(-0.25,-2) {Definite+Positive};
    \draw[-implies,double equal sign distance] (defini) -- (compat);
    \draw[-implies,double equal sign distance] (positi) -- (compat);
    \draw[-implies,double equal sign distance] (defpos) -- (positi);
    \draw[-implies,double equal sign distance] (defpos) -- (defini);
    \draw (3,+2) node[right] {$\iff$ $\exists$ $\bw$ having both $q$ and $r$ as quasiperiods};
    \draw (3,+0.25) node[right] {$\iff$ same derivated sequences};
    \draw (3,-0.25) node[right] {$\iff$ ($\forall \bw$, $q$-quasiperiodic $\implies$ infinitely many $r$'s)};
    \draw (3,-2) node[right] {$\iff$ ($\forall \bw$, $q$-quasiperiodic $\implies$ $r$-quasiperiodic)};
    \draw[dotted] (compat) -- (3,+2);
    \draw[dotted] (positi) -- (3,+0.25);
    \draw[dotted] (defini) -- (3,-0.25);
    \draw[dotted] (defpos) -- (3,-2);
    \draw (0,-2.8) node{Notions defined in this paper (Sec.~\ref{sec:compat})};
    \draw (7,-2.8) node{Preexisting notions};
    \draw (14,+2) node{Sec.~\ref{sec:compat}};
    \draw (14,+0.25) node{Sec.~\ref{sec:deriv}};
    \draw (14,-0.25) node{Sec.~\ref{sec:deriv}};
    \draw (14,-2) node{Sec.~\ref{sec:compat}};
  \end{tikzpicture}
  \caption{Implications proved in Sections~\ref{sec:compat} and~\ref{sec:deriv}}
  \label{fig:graph}
\end{figure}

\section{Determining the quasiperiods of biinfinite words}
\label{sec:det}

We quickly review classical definitions and notation.  Let $u,v$
denote two finite words and $\bw$ a finite or infinite word.  As
usual, $|u|$ denotes the length of $u$ and $uv$ the concatenation $u$
and $v$.  We note $\bw(i)$ the $i^\nth$ letter of $\bw$; letters are
often considered as words of length $1$. We write $\varepsilon$ for
the empty word.  If $u$ is of length $n$ and satisfies
$u = \bw(i) \bw(i+1) \dots \bw(i+n-1)$, then we say that $u$ is a
\emph{factor} of $\bw$ which \emph{occurs} at position $i$ and which
\emph{covers} positions $i$ to $i+n-1$ (included).  The word $u$ is a
\emph{quasiperiod} of $\bw$ if each position of $\bw$ is covered by an
occurrence of $u$.  In particular, if $\bw$ is finite or right
infinite, then $u$ is a prefix of $\bw$.  If $u$ is a word and
$\alpha, \beta$ two different letters such that $u\alpha$ and $u\beta$
are both factors of $\bw$, we say that $u$ is \emph{right special} in
$\bw$.  Symmetrically, if $\alpha{}u$ and $\beta{}u$ are factors of
$\bw$, then $u$ is \emph{left special} in $\bw$. If $\alpha u \beta$
is a factor of $\bw$, then we say that $u \beta$ is a \emph{successor}
of $\alpha u$, and conversely that $\alpha u$ is a \emph{predecessor}
of $u \beta$ in $\bw$. A word has a unique successor (resp.
predecessor) if and only if it is not right (resp. left)
special. Finally, $|u|_\alpha$ denotes the number of occurrences of
$\alpha$ in $u$.  Unless stated otherwise, all infinite words are
biinfinite, i.e. indexed by $\zz$.

We now have enough vocabulary to state the main theorem
of~\cite{GamardRichomme2016Mfcs}, adapted to the biinfinite case.

\begin{thm}
  \label{thm:old}
  Let $\bw$ denote an infinite word, $q$ a factor of $\bw$ and
  $\alpha$ a letter.
  \begin{enumerate}
  \item Suppose $q$ is a quasiperiod and $q\alpha$ a factor of
    $\bw$. The word $q\alpha$ is a quasiperiod if and only if $q$ is
    \emph{not} right special.
  \item Suppose $q$ is a quasiperiod and $\alpha{}q$ a factor of
    $\bw$. The word $\alpha{}q$ is a quasiperiod if and only if $q$ is
    \emph{not} left special.
  \item Suppose $q\alpha$ is a quasiperiod of $\bw$. The word $q$ is a
    quasiperiod if and only if either $u= q\alpha{}q\alpha$ is not a
    factor of $\bw$, or if $q$ occurs at least $3$ times in $u$.
  \item Suppose $\alpha{}q$ is a quasiperiod of $\bw$. The word $q$ is
    a quasiperiod if and only if either $u= q\alpha{}q\alpha$ is not a
    factor of $\bw$, or if $q$ occurs at least $3$ times in $u$.
  \end{enumerate}
\end{thm}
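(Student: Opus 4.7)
The plan is to adapt the proof from \cite{GamardRichomme2016Mfcs} to biinfinite words. Items 2 and 4 are the left/right mirrors of items 1 and 3 respectively, so I concentrate on items 1 and 3.

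For item 1, the direction ``$q$ not right special $\Rightarrow q\alpha$ quasiperiod'' is easy: if $q$ has $\alpha$ as its unique right extension, the occurrences of $q\alpha$ are exactly the occurrences of $q$, each covering one extra position on the right, which preserves total coverage. For the converse I argue by contraposition. Assume $q$ is right special and fix an occurrence of $q\beta$ (with $\beta \neq \alpha$) at position $j$; let $p = j + |q|$. A case analysis shows that any $q\alpha$-occurrence covering $p$ must start at some $i = j + s$ with $s \in [1, |q|]$: $i = j$ would force $\alpha = \beta$, and $i < j$ leaves $p$ outside the coverage interval. Comparing its $q$-prefix with the fixed $q\beta$ forces $q$ to have period $s$ and satisfy $q[|q|-s] = \beta$. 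Performing the analogous analysis for the coverage of position $p - 1$ (or, symmetrically, for the position immediately to the left of the $q\beta$-occurrence) produces a second period $s'$ with $q[|q|-s'] = \alpha$. Fine and Wilf's theorem then gives $q$ period $\gcd(s, s')$, whence $q[|q|-s] = q[|q|-s']$ and so $\alpha = \beta$---a contradiction. The degenerate case where Fine--Wilf does not directly apply forces $q$ to be a power of $\alpha$, but then $q\alpha = \alpha^{|q|+1}$ cannot cover the $\beta$ at the end of any $q\beta$-occurrence, yielding the contradiction by another route.

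For item 3, observe that since $q\alpha$ is a quasiperiod, every position of $\bw$ inside a $q\alpha$-occurrence, except the trailing $\alpha$, is already covered by the $q$-prefix of that occurrence. Hence $q$ is a quasiperiod if and only if, for every $q\alpha$-occurrence at $j$, the position $j + |q|$ is covered by some $q$-occurrence. Let $j'$ be the next $q\alpha$-occurrence after $j$: $q\alpha$-quasiperiodicity forces $j' \leq j + |q| + 1$. When $j' \leq j + |q|$, the $q$-prefix of the $q\alpha$ at $j'$ automatically covers $j + |q|$. The only obstruction is $j' = j + |q| + 1$, which is precisely the condition that $u = q\alpha q\alpha$ is a factor of $\bw$ starting at $j$. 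In that case, coverage of $j + |q|$ by $q$ requires a $q$-occurrence at some position $l \in [j+1, j+|q|]$, and this corresponds bijectively to a third occurrence of $q$ inside $u$ beyond the two canonical occurrences at positions $0$ and $|q|+1$.

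The main obstacle is the converse of item 1: unlike the right infinite case of \cite{GamardRichomme2016Mfcs}, there is no first occurrence of $q\beta$ to anchor an induction, so the contradiction must be derived purely from local constraints around an arbitrary fixed $q\beta$-occurrence. The combinatorics ultimately boils down to a Fine--Wilf-style argument on the two periods that the covering configurations force on $q$.
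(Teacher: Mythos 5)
The paper does not actually reprove Theorem~\ref{thm:old}: it defers to \cite{GamardRichomme2016Mfcs} and asserts that the adaptation to $\zz$-words is immediate. The tool it does set up for precisely this kind of claim is Lemma~\ref{lem:patrice} (used in Proposition~\ref{pro:predsuc}): an occurrence of $q\beta$ must sit inside a proper overlap $ps\alpha s\alpha$ of the quasiperiod $q\alpha$ formed by two consecutive occurrences of $q\alpha$, and counting occurrences of the letter $\alpha$ on both sides of $s\alpha s\alpha = x\,s\beta\,y$ forces $\beta=\alpha$. Your argument for item~3 is correct and essentially the standard one. Your argument for the hard direction of item~1, however, replaces this counting step by a Fine--Wilf argument on two periods of $q$, and that replacement has a genuine gap.

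The problem is that the two periods you extract need not satisfy the Fine--Wilf hypothesis, and your fallback for that case is false. Take $q=aba$, $\alpha=a$, $\beta=b$, so $q\alpha=abaa$ and $q\beta=abab$; suppose $abab$ occurs at position $j$ in an $abaa$-quasiperiodic word. Coverage of $p=j+|q|$ forces the covering occurrence of $abaa$ to start at $j+2$, giving $s=2$ and $q[1]=b=\beta$; coverage of position $j$ forces an occurrence starting at $j-3$, giving $s'=3$ (a vacuous period) and $q[0]=a=\alpha$. Here $s+s'=5>4=|q|+\gcd(s,s')$, so Fine--Wilf does not apply; the two letter conditions $q[1]=\beta$, $q[0]=\alpha$ are mutually consistent; and $q=aba$ is not a power of $\alpha$, contradicting your claim that the degenerate case forces $q$ to be a power of $\alpha$. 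The configuration is indeed impossible, but the contradiction lives at a position your analysis never examines: position $j+1$ carries the letter $b$ and no occurrence of $abaa$ starting in $[j-2,j+1]$ is compatible with $abab$ at $j$, so $j+1$ cannot be covered. Thus the purely local inspection of $p$, $p-1$ and the left end of the $q\beta$-occurrence is not sufficient to close the proof; you should either push the case analysis to all positions of the $q\beta$-occurrence, or adopt the paper's route via Lemma~\ref{lem:patrice}, which settles the matter in one counting step.
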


\noindent
A proof of Theorem~\ref{thm:old} can be found
in~\cite{GamardRichomme2016Mfcs} in the right infinite case; the
adaptation to the biinfinite case is immediate. That theorem basically
states that it is enough to study the set of right special factors and
square factors which are also prefixes to get the set of quasiperiods
of a given right infinite word. As special and square factors are
well-understood in combinatorics on words, it generally little
additional word to get the set of quasiperiods of a given
right infinite word. We will comment on the biinfinite version of the
theorem, which we just stated, in a few paragraphs.

We can extend this theorem a bit further, but to do so we need the
notion of \emph{overlap}.

\begin{dfn}
  Let $q$ denote a finite word. An \emph{overlap of $q$} is a word $w$
  having $q$ as a prefix and as a suffix, such that
  $|q| < |w| \leq 2|q|$. More generally, a \emph{$k$-overlap of $q$}
  is a word of the form $uv$, where $u$ is a $(k-1)$-overlap and $v$
  is such that $qv$ is an overlap of $q$.

  The quantity $2|q| - |w|$ is called the \emph{span} of the overlap.
  If $q$ is fixed, then an overlap is uniquely determined by its span,
  thus we note $\cv_q(m)$ the overlap of $q$ having span $m$ (if it
  exists).  We write $\cv_q(n_1, n_2, \dots, n_{k-1})$ the $k$-overlap
  built from overlaps $\cv_q(n_1)$, $\cv_q(n_2)$, etc.\ and we call
  $n_i$ the \emph{$i^{\mbox{\small th}}$ span} of this overlap.
\end{dfn}

An \emph{overlap} (without any explicit $k$) is thus a $2$-overlap. An
infinite word $\bw$ is $q$-quasiperiodic if and only if two
consecutive occurrences of $q$ in $\bw$ always form an overlap.

In general, we might have more than two occurrences of $q$ in an
overlap of $q$. For instance, $\cv_{aaa}(1) = aaaaa$ contains $3$
occurrences of $aaa$. We say that $w$ is a \emph{proper $k$-overlap of
  $q$} if $w$ is a $q$-quasiperiodic word containing exactly $k$
occurrences of $q$. We write $\cv^*_q(n_1, \dots, n_{k-1})$ when we
mean that $\cv_q(n_1, \dots, n_{k-1})$ is a \emph{proper} $k$-overlap
of $q$. A \emph{proper overlap} is implicitly a \emph{proper
  $2$-overlap}.

\begin{lem}
  \label{lem:patrice}
  Let $u$ denote a word and $\alpha,\beta$ letters. If $u\beta$ is a
  factor of an overlap of $u\alpha$, then $\alpha=\beta$.
\end{lem}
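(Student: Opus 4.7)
Set $n := |u|$, $m := |w|$, and $d := m - n - 1$. The hypothesis that $w$ is an overlap of $u\alpha$ gives $1 \leq d \leq n + 1$, with $u\alpha$ occurring in $w$ at positions $0$ and $d$. Let $i \in [0, d]$ be the starting position of the factor $u\beta$. If $i = 0$ or $i = d$, then the occurrence of $u\beta$ is aligned with one of the two $u\alpha$'s, so comparing the last letters immediately gives $\alpha = \beta$. I therefore focus on the case $0 < i < d$ and aim to prove $w(n) = w(n + i)$; since $w(n) = \alpha$ (last letter of the prefix $u\alpha$) and $w(n + i) = \beta$ (last letter of $u\beta$ at position $i$), this will finish the proof.

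The core idea is to combine two period relations carried by $w$. The two copies of $u\alpha$ at positions $0$ and $d$ yield $w(j) = w(j + d)$ for $j \in [0, n]$, so $w$ has period $d$. Matching the $u$-prefix of $u\alpha$ at position $0$ against the $u$-prefix of $u\beta$ at position $i$ gives $w(j) = w(j + i)$ for $j \in [0, n - 1]$, so $w$ also has period $i$ on $[0, n + i - 1]$.

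I would restrict attention to the prefix $w' := w[0..n + i - 1]$, of length $n + i$. Both periods hold on $w'$: period $i$ by construction, and period $d$ because the required range $[0, n + i - 1 - d]$ is contained in $[0, n]$ (using $i \leq d$). The Fine--Wilf hypothesis $n + i \geq i + d - \gcd(i, d)$ is equivalent to $n \geq d - \gcd(i, d)$, which holds since $d \leq n + 1$ and $\gcd(i, d) \geq 1$. Hence $w'$ has period $g := \gcd(i, d)$.

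To conclude, I pull the position $n + i$, which lies just past the end of $w'$, back into range using the period $d$ of $w$ itself: $w(n + i) = w(n + i - d)$, with $n + i - d \in [0, n + i - 1]$. Since $g$ divides both $i$ and $d$, it divides $d - i$, so $n$ and $n + i - d$ lie in the same residue class modulo $g$ inside $w'$. The period $g$ of $w'$ then yields $w(n) = w(n + i - d) = w(n + i)$, i.e.\ $\alpha = \beta$. The delicate point — the one step that deserves careful checking — is the boundary case $d = n + 1$, where the two copies of $u\alpha$ in $w$ abut rather than strictly overlap and the period-$d$ relation on $w'$ becomes (nearly) vacuous; verifying that the Fine--Wilf premise is still met and that the pullback $n + i \mapsto n + i - d$ stays inside $[0, n + i - 1]$ in that regime is what makes the argument uniform.
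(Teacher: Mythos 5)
Your proof is correct, but it takes a genuinely different route from the paper's. The paper's argument is a short letter-counting one: it writes the overlap as $w = ps\alpha s\alpha$ with $u = ps$, observes that an occurrence of $u\beta$ forces $s\beta$ to be a factor of $s\alpha s\alpha$, factors $s\alpha s\alpha = x s\beta y$, notes that $x$ is a prefix and $y$ a suffix of $s\alpha$ with $|xy| = |s\alpha|$, so that $xy = s\alpha$, and then compares the number of occurrences of the letter $\alpha$ on the two sides to get $|s\alpha|_\alpha = |s\beta|_\alpha$, hence $\alpha = \beta$ --- no periodicity theorem is needed. You instead extract two period relations (period $d$ from the self-overlap of $u\alpha$, period $i$ from matching the two copies of $u$), apply Fine--Wilf to the prefix of length $n+i$, and transport the letter at position $n+i$ back to position $n$ along the $\gcd$-period. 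Your index bookkeeping checks out, including the abutting case $d = n+1$: there the Fine--Wilf premise reduces to $\gcd(i,d)\geq 1$, and in the extreme subcase $i=1$ the period $i$ alone already makes $w'$ constant, so the vacuousness of the period-$d$ relation on $w'$ is harmless. The trade-off is that the paper's proof is shorter and entirely elementary, while yours is more systematic but leans on Fine--Wilf --- a tool the paper does invoke elsewhere (in the proof of Lemma~\ref{lem:1imp2}), so nothing in your approach is foreign to the paper's toolkit.
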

\begin{proof}
  Let $w$ denote an overlap of $u\alpha$; by definition of an overlap,
  there exist words $p$, $s$ (possibly empty) such that $u=ps$ and
  $w = u\alpha{}s\alpha = ps\alpha s\alpha$.  If $u\beta$ is a factor
  of $w$, then $s\beta$ is a factor of $s\alpha{}s\alpha$. Let $x,y$
  denote the words such that $s\alpha{}s\alpha=xs\beta{}y$. Observe
  that $|xy|=|s\alpha|$, that $x$ is a prefix and $y$ a suffix of
  $s\alpha$ to conclude that $xy=s\alpha$. Thus we can simplify
  $|s\alpha{}s\alpha|_\alpha=|xs\beta{}y|_\alpha$ into
  $|s\alpha|_\alpha = |s\beta|_\alpha$, which implies
  $|\alpha|_\alpha=|\beta|_\alpha$ and $\alpha=\beta$.  \myqed
\end{proof}

\begin{pro}
  \label{pro:predsuc}
  Let $\bw$ denote an infinite word, and $q$ a quasiperiod of length
  $n$ of $\bw$.  A successor of $q$ is a quasiperiod of $\bw$ if and
  only if $q$ is not right special.  A predecessor of $q$ is a
  quasiperiod of $\bw$ if and only if $q$ is not left special.
\end{pro}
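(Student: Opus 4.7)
By left--right symmetry (reverse all words and indices) it suffices to prove the successor statement. Write $q = \alpha u$ for some letter $\alpha$, and let $r = u\beta$ be a successor of $q$, so $q\beta = \alpha u \beta$ is a factor of $\bw$. Set $n = |q| = |r|$.

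For the easy direction, suppose $q$ is not right special. Then $\beta$ is the unique letter making $q\beta$ a factor of $\bw$, so every occurrence of $q$ is immediately followed by $\beta$. Given any position $j$, the quasiperiodicity of $q$ yields an occurrence of $q$ at some position $i$ with $i \leq j-1 \leq i+n-1$; then $\bw(i+1..i+n) = u\beta = r$, and this occurrence of $r$ covers position $j$. Hence $r$ is a quasiperiod.

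For the converse, suppose $q$ is right special, so some $q\gamma$ with $\gamma \neq \beta$ is also a factor of $\bw$. Fix an occurrence of $q\gamma$ at some position $i$, so that $\bw(i+1..i+n) = u\gamma$. Assume toward contradiction that $r = u\beta$ is a quasiperiod; then positions $i+1, \ldots, i+n$ are all covered by $r$-occurrences. If some such occurrence starts at position $i+1$, then $\bw(i+1..i+n) = r = u\beta$, which directly contradicts $\bw(i+1..i+n) = u\gamma$. Otherwise, since consecutive $r$-occurrences in an $r$-quasiperiodic word are at distance at most $n$ (else the position just past the first one would be uncovered), I chain through the $r$-occurrences that cover $[i+1, i+n]$ to extract two consecutive ones at positions $j < j'$ with $j \leq i$ and $j' \geq i+1$. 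The factor $\bw(j..j'+n-1)$ then has $r$ as both prefix and suffix with length in $(n, 2n]$, so it is an overlap of $r$; moreover it contains $u\gamma$ as a factor, appearing at its internal position $i+1-j \geq 1$. Lemma~\ref{lem:patrice}, applied with the roles of $\alpha$ and $\beta$ played by our $\beta$ and $\gamma$, then forces $\beta = \gamma$, a contradiction.

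The main obstacle is the chaining argument in the converse direction: producing a pair of consecutive $r$-occurrences that straddle position $i+1$ in such a way that their combined span forms an overlap of $r$ containing the factor $u\gamma$. Once this is in hand, Lemma~\ref{lem:patrice} packages the essential combinatorial content of the argument, and the predecessor/left special statement follows by the symmetric reasoning on reversed words.
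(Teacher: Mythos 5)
Your proof is correct and takes essentially the same route as the paper's: the forward direction by noting that every occurrence of $q=\alpha u$ extends uniquely to an occurrence of $u\beta$, and the converse by placing the offending factor $u\gamma$ inside an overlap of $u\beta$ and invoking Lemma~\ref{lem:patrice}. The only difference is that you spell out the chaining argument (choosing consecutive $r$-occurrences straddling the occurrence of $u\gamma$) which the paper leaves implicit.
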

\begin{proof}
  Let $\alpha$, $\beta$ denote letters and $u$ denote a word such that
  $\alpha u$ is a quasiperiod and $\alpha u \beta$ a factor of $\bw$.
  If $u\beta$ is a quasiperiod of $\bw$ and $u \gamma$ is also factor
  of $\bw$ for a letter $\gamma \neq \beta$, then $u\gamma$ is a
  factor of an overlap of $u \beta$.  Lemma~\ref{lem:patrice} shows
  that $\beta=\gamma$: a contradiction.  Conversely if $\alpha u$ is
  not right special, then every occurrence of $\alpha u$ continues
  into an occurrence of $u \beta$; since $\alpha u$ covers $\bw$, so
  does $u \beta$. The left special case is symmetric.  \myqed
\end{proof}

Theorem~\ref{thm:old} and Proposition~\ref{pro:predsuc} together imply
that, in order to understand the set of quasiperiods of a biinfinite
word, it is enough to know its set of special factors and its set of
square factors. These two types of factors are already well-studied
and well-understood in combinatorics on words, therefore we can reuse
this knowledge when we need to get the set of quasiperiods of an
infinite word.

Proposition~\ref{pro:predsuc} has another interesting consequence: if
an infinite, aperiodic word $\bw$ has a quasiperiod of some length
$n$, then it also has a left-special quasiperiod $\ell$ and a right
special quasiperiod $r$ of length $n$.  More precisely, the set of
quasiperiods of some length $n$ is given by a union of chains of the
form $\{u_1, \dots, u_k\}$, where $u_1$ is left special, $u_k$ is
right special, no other $u_i$ is special, and $u_{i+1}$ is the
(unique) successor of $u_i$ for each $1 \leq i < k$. If $q$ belongs to
such a chain, we call $u_1$ its left-special predecessor and $u_k$ its
right special successor.

After working out several examples, one may conjecture that there is
at most one right special (and thus one left special) quasiperiod of a
given length in any biinfinite word. In this case, there would be at
most one chain of quasiperiods of a given length, so it would be easy
to determine the set of quasiperiods of an arbitrary biinfinite
word. Unfortunately the following example disproves this
conjecture. Let $q = aba\,ab\,aba$, $r = aba\,ba\,aba$ and $\bw$ be
defined by:
\begin{equation}
  \label{eq:badex} %
  \bw
  = {}^\omega (a^{-1}q) \cdot (q)^\omega
  =
  \dots \,
  \overunderbraces%
  {&&\br{2}{r}&&\br{3}{r}&\br{1}{r}&\br{1}{r}&&&&&}%
  {&
  ba ab &aba\,
  ba ab &a&ba\,
  ba &ab &a&ba\,\cdot\,
  aba& ab aba\,
  aba& ab aba\,
  aba& ab aba
  &}%
  {&&&\br{4}{r}&&&&&&&&}
  \, \dots
\end{equation}
where the end of each occurrence of $q$ in $\bw$ is showed by a space.
The definition of $\bw$ makes it clear that $q$ is a quasiperiod of
$\bw$.  As the excerpt of $\bw$ suggests, $r$ is also a quasiperiod of
$\bw$: since the word is ultimately periodic, the same behaviour
repeats to the left and to the right. It can be directly observed in
the excerpt that both $q$ and $r$ are right special. This example is
the simplest ``pathological case'' which we mentioned in the
introduction.

\section{Checking implcations between two quasiperiods}
\label{sec:compat}

In this section we show that it is decidable to check, given two
finite words $q$ and $r$ of the same length, which of the following is
true:
\begin{enumerate}
\item Any $q$-quasiperiodic biinfinite word is also $r$-quasiperiodic;
\item there exists an infinite word which is $q$- and
  $r$-quasiperiodic, and another one which is just $q$-quasiperiodic;
\item no infinite word may have both quasiperiods $q$ and $r$ at the
  same time.
\end{enumerate}
First we develop a bit of vocabulary to state the conditions in a
convenient way.

\begin{lem}
  \label{lem:no-qrrq}
  Let $q,r$ denote two different words of the same length and $w$ a
  proper overlap of $q$. The word $w$ has at most one occurrence of
  $r$.
\end{lem}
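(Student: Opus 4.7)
The strategy is to expose the rigid periodic structure of a proper overlap and then count distinct length-$|q|$ factors directly. Set $n = |q|$ and $s = |w| - |q| \in [1, n]$. Since $q$ is both a prefix and a suffix of $w$, one writes $w = q_p \cdot q = q \cdot q_s$, where $q_p$ (resp.\ $q_s$) denotes the length-$s$ prefix (resp.\ suffix) of $q$; comparing the two decompositions letter by letter yields that $q$ has period $s$. Equivalently, $w$ is the prefix of length $n+s$ of the purely periodic word $q_p^{\omega}$.

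The key step is to leverage the properness hypothesis to show that $q_p$ is a \emph{primitive} word. Suppose, for contradiction, that $q_p = x^k$ with $x$ primitive of length $g < s$ and $k \geq 2$. Then $q$, being a prefix of $q_p^\omega = x^\omega$, would inherit the period $g$, and therefore $q$ would reoccur at each of the positions $0, g, 2g, \ldots, kg = s$ of $w$. This produces $k + 1 \geq 3$ occurrences of $q$ in $w$, contradicting the hypothesis that $w$ is a proper overlap of $q$. Hence $q_p$ is primitive.

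The conclusion now follows from a standard property of primitive words: if $x$ is primitive of length $s$, then two length-$\ell$ factors of $x^\omega$ with $\ell \geq s$ coincide if and only if their starting positions are congruent modulo $s$ (equivalently, no non-trivial cyclic rotation of a primitive word fixes it). Applied to $w$ with $\ell = n \geq s$ and valid starting positions $i \in \{0, 1, \ldots, s\}$, this yields: positions $0$ and $s$ both give the factor $q$ (they share residue $0$), while positions $1, \ldots, s-1$ have pairwise distinct residues and therefore produce $s - 1$ pairwise distinct factors, each different from $q$. Consequently any word $r \neq q$ of length $n$ occurs in $w$ at most once, as claimed.

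The main obstacle is the bridge from the combinatorial properness hypothesis (exactly two occurrences of $q$ in $w$) to the algebraic primitivity of $q_p$. Once this is established, the remainder is a routine invocation of classical combinatorics on primitive words.
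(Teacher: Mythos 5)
Your proof is correct and follows essentially the same route as the paper's: both extract the period $s=|w|-|q|$ from the prefix/suffix decomposition of the overlap, derive primitivity of the period word from the properness hypothesis via the same ``three occurrences of $q$'' argument, and conclude with the synchronization property of primitive words (Lothaire, Prop.~1.3.2). The only difference is presentational: by counting length-$|q|$ factors of $w$ through starting-position residues modulo $s$, you avoid the paper's case split between $k\geq 1$ and $y=\varepsilon$ in the decomposition $q=(xy)^kx$.
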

\begin{proof}
  By a classical lemma~\cite[Prop.~1.3.4]{Lothaire1983}, there exist
  finite words $x,y$ and an integer $k$ satisfying $q = (xy)^kx$ and
  $w = (xy)^{k+1}x$. Moreover, $xy$ is a primitive word. If it were
  not, call $z$ its primitive root and observe that an occurrence of
  $q$ would start at position $|z|$ in $w$, yielding three occurrences
  of $q$ in $w$, a contradiction. Additionally, we have either
  $k \geq 1$ or $y = \varepsilon$. Indeed, if $k = 0$ and $|y|\geq 1$,
  we would have $q=x$ and $w=xyx$, implying $|w|>2|q|$, a
  contradiction with the definition of an overlap. We treat the cases
  $k \geq 1$ and $y = \varepsilon$ separately.

  First, suppose $k \geq 1$. As $|q|=|r|$, all occurrences of $r$ in
  $w$ must start at positions between $1$ and $|xy|$ (included). Call
  $u$ the prefix of length $|xy|$ of $r$. The word $u$ is a factor of
  $xyxy$. Because $xy$ is primitive, each factor of length $|xy|$
  occurs only once in $xyxy$, except $xy$
  itself~\cite[Prop.~1.3.2]{Lothaire1983}. This means that there can
  only be one occurrence of $u$, and therefore of $r$, starting in the
  first $|xy|$ letters of $w$.

  Now suppose $k = 0$. By the previous remarks, this implies
  $y = \varepsilon$, thus $q=x$ and $q$ is primitive. As a
  consequence, each factor of length $|q|$ in $qq = w$ occurs only
  once, excepted $q$ itself (otherwise, $q = q_1q_2 = q_2q_1$ for some
  finite words $q_1, q_2$, and~\cite[Proposition~1.3.2]{Lothaire1983}
  contradicts primitivity). In particular $r$, if it occurs at all,
  occurs only once.
\myqed
\end{proof}

\begin{dfn}
  Let $q,r$ denote finite nonempty words of the same length and $m,n$
  natural integers. If the proper overlap $\cv^*_q(m)$ exists and
  contains $r$ as a factor, then we write $\occ(q,r,m)$ for the
  position of $r$ in $\cv^*_q(m)$; otherwise $\occ(q,r,m)$ is not
  defined. (Lemma~\ref{lem:no-qrrq} ensures that if $\occ(q,r,m)$
  exists, then it is unique.) If both $\occ(q,r,m)$ and $\occ(q,r,n)$
  exist, then we define the quantity
  \begin{equation}
    \label{eq:def-f}
    f_{q,r}(m,n) = m + \occ(q,r,m) - \occ(q,r,n)
  \end{equation}
  otherwise, $f_{q,r}(m,n)$ is undefined.
\end{dfn}

We insist on the fact that $\occ(q,r,m)$ is defined only where
$\cv^*_q(m)$ is defined and contains an occurrence of $r$. If
$\cv_q(m)$ is not a \emph{proper} overlap (i.e. it contains more than
two occurrences of $q$, like $\cv_{aaa}(1)$), then $\occ(q,r,m)$ is
not defined. The quantity $f_{q,r}(m,n)$ is defined if and only if
both $\occ(q,r,m)$ and $\occ(q,r,n)$ are. Moreover, $q$ and $r$ are
not symmetric: $f_{q,r} \neq f_{r,q}$.

Here is the intuitive interpretation of $f_{q,r}$. Let $m,n$ denote
natural integers such that $w=\cv^*_{q}(m,n)$ is a proper $3$-overlap
of $q$. By Lemma~\ref{lem:no-qrrq}, the word $w$ has at most $2$
occurrences of $r$. Suppose it has exactly two. If these two
occurrences form an overlap of $r$, then $f_{q,r}(m,n)$ is the span of
this overlap. If these two occurrences do not overlap, then there
exists a nonempty word $s$ such that $rsr$ is a factor of $w$; in this
case, $f_{q,r}(m,n)=-|s|$. If $w$ has less than two occurrences of
$q$, then $f_{q,r}(m,n)$ is not defined.

\begin{exa*}
  In Equation~\eqref{eq:badex} we had $q = aba\,ab\,aba$ and
  $r = aba\,ba\,aba$; in this case the function $f_{q,r}$ is given by:
  \begin{center}
    \begin{tabular}{l | l l l }
      $f$ & 0 &  1 & 3 \\ \hline
      0   & 0 & -2 & 0 \\
      1   & 3 &  1 & 3 \\
      3   & 3 &  1 & 3  
    \end{tabular}
  \end{center}
\end{exa*}

Computing $f_{q,r}$ given two finite words $q$ and $r$ of the same
length can be done in $O(|q|^3)$ time. For each $m$ and for each $n$
between $0$ and $|q|$ (included), compute $\cv^*_q(m)$ and
$\cv^*_q(n)$; in each of them, test whether $r$ appears as a factor;
if so, use Equation~\eqref{eq:def-f} to compute the value of
$f(m,n)$. Otherwise, $f(m,n)$ is not defined. The computation of
$\cv^*_q(m)$ and $\cv^*_q(m)$, and the search for $r$, can be done in
$O(|q|)$ time using an optimal string-searching algorithm.

\begin{lem}
  \label{lem:lines}
  \label{lem:gal-nfo}
  Let $q$, $r$ be finite words and $\{a_1, \dots, a_k\}$ the set of
  integers such that the proper overlap $\cv^*_q(a_i)$ exists and
  contains one occurrence of $r$. Then, for all $s_1, \dots, s_n$ in
  $\{a_1, \dots, a_k\}$, the following equation holds:
  \begin{equation}
    \label{eq:sum}
    s_1 + \dots + s_k = f(s_1, s_2) + \dots + f(s_k, s_1).
  \end{equation}
  In particular, for all integers $k,l,m,n$ in $\{a_1, \dots, a_k\}$ we have:
    $f(m,m)=m$;
    the relation $f(m,n)=m$ implies that $f(n,m)=n$; and
    the relation $f(m,k)=f(m,l)$ implies that $f(n,k)=f(n,l)$.
\end{lem}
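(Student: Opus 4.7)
The plan is to treat this whole lemma as a purely algebraic consequence of the defining formula
$f_{q,r}(m,n) = m + \occ(q,r,m) - \occ(q,r,n)$
from Equation~\eqref{eq:def-f}. The hypothesis that each $s_i$ lies in $\{a_1, \dots, a_k\}$ is used only to guarantee that $\occ(q, r, s_i)$ is defined for every index, which in turn ensures that every $f$-value in the sum is defined; Lemma~\ref{lem:no-qrrq} has already done the substantive combinatorial work to make $\occ(q,r,m)$ well defined, so nothing else about the geometry of overlaps needs to be revisited.

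For Equation~\eqref{eq:sum} (reading the trailing ``$s_k$'' as a typo for ``$s_n$'' and the sum as cyclic modulo $n$), I would simply expand each $f(s_i, s_{i+1})$ via~\eqref{eq:def-f}, obtaining
\[
\sum_{i=1}^{n} f(s_i, s_{i+1})
= \sum_{i=1}^{n} s_i + \sum_{i=1}^{n} \occ(q,r,s_i) - \sum_{i=1}^{n} \occ(q,r,s_{i+1}).
\]
The last two sums cancel after a cyclic reindexing of the dummy variable, leaving $\sum_i s_i$ as claimed.

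The three auxiliary consequences are then direct substitutions into~\eqref{eq:def-f}. First, $f(m,m)=m$ because the two $\occ$ terms coincide. Second, if $f(m,n)=m$ then~\eqref{eq:def-f} forces $\occ(q,r,m)=\occ(q,r,n)$, and feeding this back into the definition gives $f(n,m)=n$. Third, if $f(m,k)=f(m,l)$, cancelling the common term $m+\occ(q,r,m)$ yields $\occ(q,r,k)=\occ(q,r,l)$, and substituting this equality into the definitions of $f(n,k)$ and $f(n,l)$ shows they agree. Each of these can also be extracted as a specialisation of~\eqref{eq:sum} applied to a length-two or length-three cycle, which I may mention but will not use as the primary argument.

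There is no real obstacle: once one unfolds the definition of $f_{q,r}$, everything is telescoping bookkeeping. The only small writing decision is how to handle the apparent index mismatch in~\eqref{eq:sum}; I would silently correct it by introducing the convention ``indices read cyclically modulo $n$'' at the start of the proof, so that the statement and its justification match without drawing attention to the typo.
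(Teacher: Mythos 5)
Your proposal is correct and follows essentially the same route as the paper: the authors also reduce everything to the definition $f(m,n)=m+\occ(q,r,m)-\occ(q,r,n)$, verify the two-term cancellation $f(a_1,a_2)+f(a_2,a_1)=a_1+a_2$, and obtain the cyclic sum by induction (your one-shot telescoping is the same computation), with the three corollaries read off from the definition. Your observation that membership in $\{a_1,\dots,a_k\}$ is exactly what makes every $\occ$ term (hence every $f$-value) defined matches the paper's opening remark, and the index mismatch you flag is indeed a typo in the statement.
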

\begin{proof}
  Since $\cv^*_q(a_i)$ contains exactly one occurrence of $r$ for all
  $i$, Lemma~\ref{lem:no-qrrq} implies that $\cv^*_q(a_i, a_j)$
  contains exactly two occurrences of $r$ for all $i,j$. As a
  consequence, $f_{q,r}(a_i, a_j)$ is always defined. Conversely,
  $f_{q,r}(m,n)$ is not defined if
  $\{m,n\}\not\subseteq\{a_1, \dots, a_k\}$, since $\cv^*_q(m,n)$ does
  not contain two occurrences of $r$.

  For Equation~\eqref{eq:sum}, first we compute:
  \begin{equation*}
    f(a_1, a_2) + f(a_2,a_1) = a_1 + \occ(a_1) - \occ(a_2) + a_2 + \occ(a_2) - \occ(a_1)
    = a_1 + a_2
  \end{equation*}
  and then the result is easily proved by induction. The three other
  facts of this lemma are immediate consequences of
  Equation~\eqref{eq:sum} and of the definition of $f$.  \myqed
\end{proof}


Now we have enough machinery to state conditions on $(q,r)$ which
characterize situations where $q$-quasiperiodicity implies
$r$-quasiperiodicity, or implies non-$r$-quasiperiodicity.

\begin{dfn}
  Let $q,r$ denote finite nonempty words of the same length. The
  couple $(q,r)$ is:
  \begin{itemize}
  \item \emph{compatible} if there exist integers $m,n$ such that
    $f_{q,r}(m,n)$ is defined;
  \item \emph{definite} if $f_{q,r}(m,n)$ is defined wherever
    $\cv^*_q(m,n)$ is;
  \item \emph{positive} if $f_{q,r}(m,n)$ is defined at least on one
    couple and is nonnegative wherever it is defined.
  \end{itemize}
\end{dfn}

\noindent
Since $f_{q,r}$ is computable in time $O(|q|^3)$, those relations are
testable with the same time complexity.

\begin{thm}
\label{thm:rel-qp}
Let $q,r$ denote two finite, nonempty words of the same length.
\begin{enumerate}
\item The couple $(q,r)$ is non-compatible if and only if
  $q$-quasiperiodicity implies non-$r$-quasiperiodicity.
\item The couple $(q,r)$ is definite and positive if and only if
  $q$-quasiperiodicity implies $r$-quasiperiodicity.
\item The couple $(q,r)$ is compatible, but not definite positive if
  and only if there exists a biinfinite word with quasiperiods $q$ and
  $r$, and another biinfinite word with only quasiperiod $q$.
\end{enumerate}
\end{thm}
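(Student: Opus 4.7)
The plan is to reduce everything to a combinatorial description of how $r$ can occur inside a $q$-quasiperiodic biinfinite word. The starting point is the elementary observation that, since $|q| = |r|$ and the case $q = r$ is trivial, every occurrence of $r$ in a $q$-quasiperiodic word $\bw$ is contained in the proper $2$-overlap formed by two consecutive occurrences of $q$ in $\bw$; by Lemma~\ref{lem:no-qrrq} there is at most one $r$ per such $2$-overlap. Enumerating the occurrences of $q$ in $\bw$ as $(q_i)_{i \in \zz}$ with successive spans $m_i$, this yields a complete criterion: $\bw$ is $r$-quasiperiodic if and only if, for every~$i$, the overlap $\cv^*_q(m_i)$ contains~$r$ (so that $\occ(q,r,m_i)$ is defined) and the span $f_{q,r}(m_i, m_{i+1})$ of the overlap formed by the two resulting consecutive occurrences of $r$ is nonnegative.

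The other key ingredient is the auxiliary fact that $\cv^*_q(m,n)$ is a proper $3$-overlap if and only if $\cv^*_q(m)$ and $\cv^*_q(n)$ are both proper $2$-overlaps: any extra occurrence of $q$ in $\cv_q(m,n)$ would already appear in $\cv_q(m)$ or in $\cv_q(n)$, since a valid span is strictly less than $|q|$. In particular, whenever $f_{q,r}(m,n)$ is defined the $3$-overlap $\cv^*_q(m,n)$ is proper, and one can concretely realize $(m,n)$ as successive spans in a biinfinite $q$-quasiperiodic word by simply alternating spans $m$ and $n$.

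The three items now follow. For item~1, non-compatibility means that no proper $2$-overlap of $q$ contains $r$, so $r$ never occurs at all in any $q$-quasiperiodic word; conversely, when $(q,r)$ is compatible, pick $m_0$ with $\cv^*_q(m_0)$ containing $r$ and take $\bw$ with constant span $m_0$, which is both $q$- and $r$-quasiperiodic since $f_{q,r}(m_0, m_0) = m_0 \geq 0$ by Lemma~\ref{lem:lines}. For item~2 $(\Rightarrow)$, the criterion combined with definiteness and positivity directly gives that every $q$-quasiperiodic $\bw$ is $r$-quasiperiodic. For item~2 $(\Leftarrow)$, given any $(m,n)$ with $\cv^*_q(m,n)$ proper, realize it in an alternating $q$-quasiperiodic word; the hypothesis that this word is $r$-quasiperiodic forces $f_{q,r}(m,n)$ both to be defined (definiteness) and to be nonnegative (positivity), and the auxiliary fact ensures that this covers every pair where $f_{q,r}$ is defined. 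Finally, item~3 is a purely logical combination of items~1 and~2: compatibility yields a biinfinite word with both quasiperiods, while the failure of \emph{definite and positive} yields, via the same constructions, a $q$-quasiperiodic word that is not $r$-quasiperiodic.

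I expect the main subtlety to lie in the auxiliary fact linking proper $3$-overlaps to proper $2$-overlaps, as it is what makes the definitions of \emph{definite} and \emph{positive} match cleanly with the semantic conditions of item~2. Without it one would have to separately propagate nonnegativity of $f_{q,r}$ from pairs realizable in a $q$-quasiperiodic word to all defined pairs, which would require the cyclic span-budget identity of Lemma~\ref{lem:gal-nfo}.
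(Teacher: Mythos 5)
Your proof is correct and follows essentially the same route as the paper's: both reduce $r$-quasiperiodicity to the presence of a (unique, by Lemma~\ref{lem:no-qrrq}) occurrence of $r$ in each proper overlap of consecutive occurrences of $q$, together with nonnegativity of the induced span $f_{q,r}$, and both obtain the converse directions by realizing span pairs in periodic biinfinite words. Your explicit covering criterion and the auxiliary fact relating proper $3$-overlaps to pairs of proper $2$-overlaps merely make precise what the paper's argument leaves implicit.
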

\begin{proof} We prove the three statements separately.
  
  \emph{Statement 1.} Let $\bw$ denote $q$-quasiperiodic word. It
  contains a factor of the form $u=\cv^*_q(m, n)$. By hypothesis
  $f_{q,r}(m, n)$ is not defined, which means that $u$ contains either
  $0$ or $1$ occurrences of $r$. By Lemma~\ref{lem:no-qrrq}, at least
  one position in $\bw$ is not covered by $r$, so, $\bw$ is not
  $r$-quasiperiodic.
  
  Conversely, suppose that each $q$-quasiperiodic word is
  non-$r$-quasiperiodic. Let $m,n$ denote a pair of integers such that
  the proper $3$-overlap $\cv^*_q(m,n)$ exists and consider the
  infinite periodic word given by
  $\bw = \cv_q(\dots m, n, m, n, m, n \dots)$.  Since $\bw$ is not
  $r$-quasiperiodic, either $\cv^*_q(m,n)$ or $\cv^*_q(n,m)$ (or both)
  contains less than two occurrences of $r$. In other terms, either
  $\cv^*_q(m,n)$ or $\cv^*_q(n,m)$ is not defined, and by
  Lemma~\ref{lem:lines} the other one is not defined either.  Since
  this reasonning holds for any $m,n$ where $\cv^*_q(m,n)$ exists, the
  function $f_{q,r}$ is nowhere defined.

  \smallskip

  \emph{Statement 2.} Suppose $(q,r)$ is definite and positive and
  consider $\bw$ a $q$-quasiperiodic biinfinite word. Any position in
  $\bw$ is covered by an occurrence of $q$; let $m,n$ denote the
  integers such that this occurrence is the middle one in the proper
  $3$-overlap $\cv^*_q(m.n)$. By hypothesis, $f_{q,r}(m,n)$ is defined
  and positive, so $\cv^*_q(m,n)$ contains a proper overlap of
  $r$. Lemma~\ref{lem:no-qrrq} implies that this proper overlap of $r$
  covers the middle occurrence of $q$. Consequently, any position in
  $\bw$ is covered by an occurrence of $r$.

  Conversely, suppose that $q$-quasiperiodicity implies
  $r$-qua\-si\-pe\-rio\-di\-ci\-ty. Let $m,n$ denote an arbitrary pair
  of integers $m,n$ such that the proper $3$-overlap $\cv^*_q(m,n)$
  exists and $\bw$ denote the periodic biinfinite word given by
  $\bw=\cv^*_q(\dots m,n,m,n,m,n \dots)$. By hypothesis this word is
  $r$-quasiperiodic, so by Lemma~\ref{lem:no-qrrq} the word
  $\cv^*_q(m,n)$ contains a proper overlap of $r$. Consequently,
  $f_{q,r}(m,n)$ is defined and positive.

  \smallskip
  
  \emph{Statement 3.} The proof is immediate as this statement
  exhausts all possibilities not covered by Statements~1 and 2.

  \myqed
\end{proof}

\section{On compatible and positive couples of quasiperiods}
\label{sec:deriv}

In this section, we investigate what the property ``compatible and
positive'' implies for a couple of words (not necessarily
definite). We get a characterization in terms of derivated sequences,
and another one in terms of chains of quasiperiods.

\smallskip

The concept derivated sequence originates from Mouchard's work on
quasiperiodic finite words~\cite{IliopoulosMouchard1999Jalc}, and was
later used by Marcus and Monteil to establish independence results
between quasiperiodicity and other properties on right infinite
words~\cite{MarcusMonteil2006Arxiv}. We start by recalling the
definition.

\begin{dfn}
Let $\bw$ denote a biinfinite word and $q$ one of its
quasiperiods. The \emph{sequence of positions} of $q$ in $\bw$ is the
sequence $(q_n)_{n \in \zz}$ of positions of occurrences of $q$ in
$\bw$, in increasing order, such that $q_0$ is the position of the
leftmost occurrence covering the position $0$. If $(q_n)_{n \in \zz}$
is the sequence of positions of $q$ in $\bw$, then
$(q_{n+1} - q_n)_{n \in \zz}$ is called the \emph{derivated sequence}
of $\bw$ along $q$.
\end{dfn}

\noindent
For example, in Equation~\eqref{eq:badex}, the derivated sequence of
$\bw$ along $q$ is ${}^{\omega}(7)(8)^\omega$ and the derivated
sequence along $r$ is ${}^{\omega}(7) \, 5 \, (8)^\omega$.  Observe
that a word is $q$-quasiperiodic if and only if its derivated sequence
along $q$ is bounded by $|q|$. In this case, the derivated sequence
contains enough information to reconstruct the initial word.

\medskip

Chains of quasiperiods were already mentioned in
Section~\ref{sec:det}. Recall the following consequence of
Theorem~\ref{thm:old} and Proposition~\ref{pro:predsuc}: in an
infinite word $\bw$, the set of quasiperiods of some length $n$ is
given by a union of chains of the form $\{u_1, \dots, u_k\}$, where
$u_1$ is left special, $u_k$ is right special, no other $u_i$ is
special, and $u_{i+1}$ is the (unique) successor of $u_i$ for each
$1 \leq i < k$.  Equation~\eqref{eq:badex} shows an example of a word
having two such chains for length $8$.

\begin{thm}
  \label{thm:deriv}
  Let $\bw$ denote a biinfinite word and $q$, $r$ denote two
  quasiperiods of $\bw$ of the same length. The following statements
  are equivalent:
  \begin{enumerate}
  \item the couple $(q,r)$ is compatible and positive;
  \item for all word $\bw$ having quasiperiods $q$ and $r$, the
    derivated sequences along $q$ and $r$ are equal;
  \item for all word $\bw$ having quasiperiods $q$ and $r$, those
    quasiperiods belong to the same chain.
  \end{enumerate}
\end{thm}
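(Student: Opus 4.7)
My plan is to prove the cycle of implications $(3) \Rightarrow (2) \Rightarrow (1) \Rightarrow (3)$, using the chain description from Section~\ref{sec:det} and the shift $\beta(m) := \occ(q,r,m)$, so that $f_{q,r}(m,n) = m + \beta(m) - \beta(n)$.

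For $(3) \Rightarrow (2)$, fix $\bw$ with both quasiperiods and write $q = u_i$, $r = u_j$ in a common chain $u_1, \ldots, u_k$. For $1 < \ell < k$, the factor $u_\ell$ is neither left- nor right-special, so by Proposition~\ref{pro:predsuc} its unique predecessor is $u_{\ell-1}$ and its unique successor is $u_{\ell+1}$. Iterating, every occurrence of $q$ at position $p$ in $\bw$ yields an occurrence of $r$ at position $p + (j-i)$, and conversely; the two position sequences thus differ by the constant $j-i$, and their derivated sequences coincide.

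For $(2) \Rightarrow (1)$, compatibility is granted by the preamble. For positivity I argue contrapositively: suppose $f_{q,r}(m_0, n_0) < 0$ for some defined pair. Lemma~\ref{lem:lines} yields $f_{q,r}(n_0, m_0) = m_0 + n_0 - f_{q,r}(m_0, n_0) > 0$, and $f_{q,r}(m_0,m_0) = m_0 \geq 0$, $f_{q,r}(n_0,n_0) = n_0 \geq 0$; hence the word $\bw$ whose $q$-spans are $\dots, n_0, n_0, m_0, m_0, \dots$ (a single switch from $n_0$ to $m_0$) has all consecutive 3-overlaps with nonnegative $f$-value, so $\bw$ is both $q$- and $r$-quasiperiodic. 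By Lemma~\ref{lem:no-qrrq}, the unique $r$-occurrence in each 2-overlap sits at relative position $\beta(n_0)$ on the left of the switch and $\beta(m_0)$ on the right; since $f_{q,r}(m_0, n_0) \neq m_0$ forces $\beta(m_0) \neq \beta(n_0)$, the shift between $r$- and $q$-positions changes across the switch, so the derivated sequences differ, contradicting (2).

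For $(1) \Rightarrow (3)$, assume $(q,r)$ positive and fix $\bw$ with both quasiperiods. I would show that along the sequence of $q$-spans $(s_i)_{i \in \zz}$ used in $\bw$ the shift $\beta(s_i)$ is constant, say equal to $d$. Lemma~\ref{lem:no-qrrq} places at most one $r$-occurrence per 2-overlap, at position $\beta(s_i)$ when defined; positivity gives $\beta(s_{i+1}) \leq s_i + \beta(s_i)$, while the distance between consecutive $r$-positions in $\bw$ being at least $1$ provides a matching lower bound. Combined with the fact that $\beta$ takes values in the finite set $\{0, \ldots, |q|-1\}$ and the bi-infinite structure of $\bw$ (so $\beta$ cannot drift), the cycle identity from Lemma~\ref{lem:lines} forces the inequalities to be equalities, whence $\beta(s_i) = d$ for all $i$. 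Thus $r$-positions equal $q$-positions $+ d$, and by iterated application of Proposition~\ref{pro:predsuc}, $r$ coincides with the chain element $u_{i+d}$ of $q$'s chain.

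The main obstacle is promoting $\beta(s_{i+1}) \leq s_i + \beta(s_i)$ to an equality in the $(1) \Rightarrow (3)$ step, since positivity alone allows $\beta$ to vary; the proof has to exploit the bounded range of $\beta$ together with the telescoping cycle identity of Lemma~\ref{lem:lines}. A secondary technicality is treating spans outside $D := \{m : \beta(m) \text{ defined}\}$: the corresponding 2-overlaps contain no occurrence of $r$, so $r$-coverage must come from $r$-occurrences extending in from neighbouring 2-overlaps, and one must check these external occurrences also sit at the same shift $d$.
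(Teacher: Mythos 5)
Your directions $(3)\Rightarrow(2)$ and $(2)\Rightarrow(1)$ are essentially sound: the first matches the converse direction of Lemma~\ref{lem:2iff3}, and the second replaces the paper's route through Lemma~\ref{lem:equiv-f} (equal derivated sequences force $f(x,y)=x$ for all defined pairs, or $f(x,y)=y$ for all of them) by an explicit two-span counterexample word, which works. The genuine gap is in $(1)\Rightarrow(3)$, which is the hard implication (the paper's Lemma~\ref{lem:1imp2}). Positivity gives the per-couple inequality $\beta(s_{i+1})\le s_i+\beta(s_i)$, but the lower bound you extract from $r_{i+1}-r_i\ge 1$ is $\beta(s_{i+1})\ge\beta(s_i)-(|q|-s_i)+1$, which is nowhere near matching. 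Moreover the cycle identity of Lemma~\ref{lem:lines} cannot ``force the inequalities to be equalities'': Equation~\eqref{eq:sum} holds automatically by telescoping the $\occ$ terms, so it carries no information beyond the definition of $f$. Concretely, if two spans $a<b$ both occur and $t:=\beta(a)-\beta(b)$, positivity only constrains $-a\le t\le b$, whereas the conclusion you need is $t\in\{0,\,b-a\}$; intermediate values of $t$ are perfectly consistent with positivity, with the boundedness of $\beta$, and with the absence of drift, so no purely numerical bookkeeping can conclude.

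What is missing is the combinatorial content of the paper's Lemma~\ref{lem:1imp2}: order the occurring spans, take a minimal violation $f(\sigma_1,\sigma_j)\ne\sigma_1$, rule out $f(\sigma_1,\sigma_j)>\sigma_j$ by the cycle identity, rule out the intermediate case $\sigma_1<f(\sigma_1,\sigma_j)<\sigma_j$ by building an auxiliary periodic $q$-quasiperiodic word on which Lemma~\ref{lem:lines} yields the contradiction $f(x,x)\ne x$, and rule out $f(\sigma_1,\sigma_j)=\sigma_j$ (your ``$\beta$ switches from tracking the left end to tracking the right end'') by a Fine--Wilf argument showing that $q$, $r$ and hence $\bw$ itself would be periodic. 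A secondary issue: you normalize to $\beta$ constant, but the correct conclusion admits two regimes ($\beta(s)$ constant, or $s+\beta(s)$ constant, i.e.\ derivated sequences equal up to a shift of one index), and one must reduce to a single regime explicitly, e.g.\ by replacing $f(x,y)$ with $f(y,x)$ as the paper does. Without these ingredients the implication $(1)\Rightarrow(3)$ remains unproved.
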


\noindent
We actually prove something slightly more precise: the next
proposition implies Theorem~\ref{thm:deriv}.

\begin{pro}
  \label{pro:spec}
  Let $\bw$ denote a biinfinite word and $q$, $r$ denote two
  quasiperiods of $\bw$ of the same length. The following statements
  are equivalent:
  \begin{enumerate}
  \item for each integers $m,n$ such that the proper $3$-overlap
    $\cv^*_q(m,n)$ exists and is a factor of $\bw$, we have
    $f_{q,r}(m,n) \geq 0$.
  \item the derivated sequences of $\bw$ along $q$ and along $r$ are
    equal up to a shift of one position;
  \item the quasiperiods $q$ and $r$ belong to the same chain in
    $\bw$.
  \end{enumerate}
\end{pro}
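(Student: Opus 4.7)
The plan is to prove the cycle of implications (3) $\Rightarrow$ (2) $\Rightarrow$ (1) $\Rightarrow$ (3). The first two implications reduce to routine verifications from the chain structure; the last is the main obstacle.

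For (3) $\Rightarrow$ (2), suppose $q$ and $r$ lie in the same chain of quasiperiods of $\bw$. Iterating Proposition~\ref{pro:predsuc}, there is a fixed integer $c$ such that each $q$-occurrence at a position $p$ corresponds bijectively to an $r$-occurrence at $p + c$, and vice versa. The biinfinite position sequences of $q$ and $r$ are therefore translates of one another, which yields equal derivated sequences up to a constant index shift; inspecting the anchors $q_0$ and $r_0$ (both pinned as leftmost occurrences covering position $0$) shows that this shift is exactly one position.

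For (2) $\Rightarrow$ (1), equality of derivated sequences up to a one-position shift means that the $q$- and $r$-position sequences in $\bw$ differ by a single constant translation. Lemma~\ref{lem:no-qrrq}, combined with the $r$-quasi-periodicity of $\bw$, forces each 2-overlap of $q$ in $\bw$ to contain exactly one occurrence of $r$, necessarily sitting at this common offset from the left $q$. Hence $\occ(q, r, m)$ is independent of the span $m$ among spans appearing in $\bw$, and $f_{q, r}(m, n) = m + c - c = m \geq 0$ on every 3-overlap $\cv^*_q(m, n)$ that is a factor of $\bw$.

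The main obstacle is (1) $\Rightarrow$ (3). Condition (1) translates into the one-sided inequalities $o_{m_{i+1}} \leq o_{m_i} + m_i$ on the offsets $o_m = \occ(q, r, m)$ for every transition $(m_i, m_{i+1})$ appearing in the biinfinite span sequence of $\bw$. My plan is to combine these with the cycle-sum identity of Lemma~\ref{lem:gal-nfo}, applied to closed cycles of span-values extracted from the biinfinite sequence, and with the integrality of the offsets, to force the inequalities to collapse into equalities $o_m = c$ across every span appearing in $\bw$. Once this constancy is proved, each $r$-position is a uniform translate of some $q$-position, and iterating Proposition~\ref{pro:predsuc} identifies $r$ with a specific successor iterate of $q$, placing both in the same chain of length-$|q|$ quasiperiods of $\bw$. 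The delicate part is extracting rigid equalities from the one-sided inequalities of condition (1); I expect the argument to split according to the shape of the biinfinite span sequence and to require a separate treatment at the left-special and right-special endpoints of each chain.
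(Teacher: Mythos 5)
Your implications $(3)\Rightarrow(2)$ and $(2)\Rightarrow(1)$ are essentially the paper's Lemma~\ref{lem:2iff3} and Lemma~\ref{lem:equiv-f} and are fine (for $(2)\Rightarrow(1)$ note there are two symmetric cases, $f(m,n)=m$ or $f(m,n)=n$, but both give $f\geq 0$). The problem is $(1)\Rightarrow(3)$, which you correctly identify as the main obstacle but for which you only offer a plan, and the plan as stated would fail. Writing $o_m=\occ(q,r,m)$, condition~(1) gives the inequalities $o_n - o_m \leq m$ for occurring couples, and the cycle-sum identity of Lemma~\ref{lem:gal-nfo} is an automatic telescoping consequence of the definition $f(m,n)=m+o_m-o_n$, so it carries no information beyond that definition. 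These two ingredients, even with integrality, do not force the offsets to be constant: for instance $o_m=0$, $o_n=1$ with $m,n\geq 1$ satisfies $f(m,n)=m-1\geq 0$, $f(n,m)=n+1\geq 0$ and every cycle identity, yet is non-constant. So no purely arithmetic collapse of the inequalities is available.

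What actually closes the gap in the paper (Lemma~\ref{lem:1imp2}) is word-combinatorial, not numerical. One orders the occurring spans, reduces to showing $f(\sigma_1,x)=\sigma_1$ for the smallest span $\sigma_1$, and excludes the alternatives by three different arguments: the case $f(\sigma_1,\sigma_j)>\sigma_j$ by minimality of $\sigma_1$ via the cycle identity; the case $\sigma_1<f(\sigma_1,\sigma_j)<\sigma_j$ by constructing an auxiliary periodic $q$-quasiperiodic word with derivated sequence ${}^\omega(x\,\sigma_1\,\sigma_j)^\omega$ and deriving $f(x,x)\neq x$; and the case $f(\sigma_1,\sigma_j)=\sigma_j$ by a Fine--Wilf argument showing $q$, $r$ and then all of $\bw$ would be $|\delta|$-periodic with $|\delta|=\gcd(\sigma_j-\sigma_1,\sigma_2-\sigma_1)$, a contradiction. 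This relies on the uniqueness of $r$ inside a proper overlap of $q$ (Lemma~\ref{lem:no-qrrq}, via primitivity of the overlap root) and on the actual letter content of the overlaps, none of which appears in your proposed mechanism. You would need to supply an argument of this kind; as written, the crucial implication is not proved.
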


%

The next lemma gives $2 \implies 1$ in Proposition~\ref{pro:spec},
because $x$ and $y$ are nonnegative integers. However it is actually
more general and we will also reuse it later in the proof.

\begin{lem}
  \label{lem:equiv-f} %
  Let $\bw$ denote an infinite word and $q$, $r$ two quasiperiods of
  $\bw$ of the same length.  The derivated sequences of $q$ and $r$ in
  $\bw$ are the same if and only if either: for each pair of natural
  integers $(x,y)$ such that $f(x,y)$ is defined, we have
  $f(x, y) = x$; or for each such pair $(x,y)$, we have $f(x,y) = y$.
\end{lem}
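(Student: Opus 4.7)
The plan is to translate between the positions of $q$- and $r$-occurrences in $\bw$ and the algebraic content of $f_{q,r}$. First I would index the $q$-occurrences as $(q_n)_{n\in\zz}$, writing $m_n$ for the span of the proper overlap $\cv^*_q(m_n)$ formed by $q_n$ and $q_{n+1}$, so that $q_{n+1}-q_n=|q|-m_n$. Lemma~\ref{lem:no-qrrq} forbids two $r$-occurrences inside a single $\cv^*_q(m_n)$, while a length argument (the $3$-overlap $\cv^*_q(m_n,m_{n+1})$ has length $3|q|-m_n-m_{n+1}>|q|$ and thus cannot be covered by one $r$) shows it must contain at least one. Hence each $\cv^*_q(m_n)$ contains exactly one $r$-occurrence, at absolute position $q_n+\occ(q,r,m_n)$ with strict bounds $0<\occ(q,r,m_n)<|q|-m_n$ (strict because $r\neq q$). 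This produces an order-preserving bijection $\sigma\colon\zz\to\zz$ between $q$-overlaps and $r$-indices, so $\sigma(n)=n+k_0$ for a constant $k_0\in\zz$, and a direct computation yields the key identity
\begin{equation*}
  r_{\sigma(n+1)}-r_{\sigma(n)} \;=\; (q_{n+1}-q_n) + \bigl(\occ(q,r,m_{n+1})-\occ(q,r,m_n)\bigr) \;=\; |q|-f_{q,r}(m_n,m_{n+1}),
\end{equation*}
to be compared with the $q$-gap $|q|-m_n$.

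For the backward direction, $f(x,y)=x$ on every defined pair gives $f_{q,r}(m_n,m_{n+1})=m_n$, so consecutive $r$-gaps coincide with consecutive $q$-gaps at the same index, yielding identical bi-infinite derivated sequences (with $k_0=0$). Symmetrically $f(x,y)=y$ on every defined pair gives $f_{q,r}(m_n,m_{n+1})=m_{n+1}$, so the $r$-gap at index $\sigma(n)$ equals the $q$-gap at index $n+1$, again producing the same bi-infinite sequence (now with $k_0=1$).

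For the forward direction, assume the derivated sequences agree as $\zz$-indexed sequences. Telescoping yields $r_j-q_j$ constant, say $r_j=q_j+c$ for some $c\neq 0$ (else $r=q$ as words). Combined with $r_{\sigma(n)}=q_n+\occ(q,r,m_n)$, this gives $\occ(q,r,m_n)=q_{n+k_0}-q_n+c$, whose right-hand side must depend only on $m_n$. When $c>0$, the strict bound $\occ(q,r,m_n)<|q|-m_n$ combined with the requirement that the associated $r$-occurrence lie strictly inside $\cv^*_q(m_n)$ forces $k_0=0$ and hence $\occ(q,r,m_n)=c$ (Case A, giving $f_{q,r}(m,n)=m$ on pairs of spans of $\bw$); when $c<0$, the analogous analysis forces $k_0=1$ and $\occ(q,r,m_n)=|q|-m_n+c$ (Case B, giving $f_{q,r}(m,n)=n$). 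The main obstacle I anticipate is extending these identities from spans appearing in $\bw$ to every pair $(x,y)$ where $f_{q,r}$ is defined; for this I would pivot through any one span $m^*$ of $\bw$ and invoke the rigidity clause of Lemma~\ref{lem:lines} --- namely ``$f(m,k)=f(m,l)$ implies $f(n,k)=f(n,l)$ for every $n$'' --- to propagate the value of $\occ(q,r,m^*)$ to every other defined $\occ(q,r,\cdot)$, thereby recovering the stated global condition on $f_{q,r}$.
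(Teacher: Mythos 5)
Your proposal follows the same route as the paper's proof: index the occurrences of $q$ and $r$, observe that the gap between consecutive occurrences of $r$ is determined by $f_{q,r}$ evaluated at the spans of the surrounding $q$-overlaps, and translate the two conditions $f(x,y)=x$ and $f(x,y)=y$ into the two possible alignments of the derivated sequences. Your execution is in fact more careful than the paper's: the identity $r_{\sigma(n+1)}-r_{\sigma(n)}=|q|-f_{q,r}(m_n,m_{n+1})$ correctly converts between spans and gaps, whereas the paper's displayed identity $r_{n+1}-r_n=f(q_n-q_{n-1},q_{n+1}-q_n)$ feeds gaps into a function whose arguments are spans and reads off a gap where $f$ returns a span (it is only ``morally'' correct). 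You also actually carry out the forward direction, which the paper dismisses as symmetric.

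The one step that does not go through as described is the final extension from the spans realised in $\bw$ to every pair $(x,y)$ on which $f_{q,r}$ is defined. The rigidity clause of Lemma~\ref{lem:lines} transports an equality $f(m,k)=f(m,l)$ from one first argument $m$ to another first argument $n$; it never relates $\occ(q,r,y)$, for a span $y$ whose overlap $\cv^*_q(y)$ exists abstractly but never occurs in $\bw$, to the common value of $\occ(q,r,\cdot)$ on the spans that do occur. From the word $\bw$ you can only conclude $f(x,y)=x$ (resp.\ $f(x,y)=y$) for $x,y$ among the spans realised in $\bw$; nothing pins down $f_{q,r}$ at unrealised spans, so the pivot through $m^*$ does not propagate. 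To be fair, the paper's own proof has exactly the same limitation without acknowledging it, and the restricted statement (quantified over spans whose overlaps are factors of $\bw$) is all that is needed for Proposition~\ref{pro:spec}; but as a repair of the literal statement, the appeal to Lemma~\ref{lem:lines} does not do the job. A secondary caution: your claim that every proper $q$-overlap of $\bw$ contains at least one full occurrence of $r$ deserves a real argument (coverage by $r$ gives consecutive $r$-starts at distance at most $|q|$, which does not immediately place a start inside the interval $[q_n,q_{n+1}]$ when $m_n>0$); the paper assumes this silently, so you should either prove it or handle the degenerate case.
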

\begin{proof}
  Call $(q_n)_{n \in \zz}$ the sequence of positions of $q$ in $\bw$,
  and similarly $(r_n)_{n \in \zz}$ the sequence of positions of $r$
  in $\bw$; observe that
  $r_{n+1} - r_n = f(q_n - q_{n-1}, q_{n+1} - q_n)$.
  The fact that $f(x,y) = x$ and $f(x,y) = y$ respectively translate to
  \begin{equation*}
    f(q_{n+1}-q_n, q_n-q_{n-1}) = q_{n+1} - q_n, \quad \text{ and } \quad
    f(q_{n+1}-q_n, q_n-q_{n-1}) = q_n - q_{n-1}.
  \end{equation*}
  By replacing in the previous equation, we get the two possibilities
  \begin{equation*}
    r_{n+1} - r_n = q_{n+1} - q_n, \quad \text{ and } \quad
    r_{n+1} - r_n = q_n - q_{n-1},
  \end{equation*}
  which both imply that the derivated sequences are equal up to a
  shift. The converse argument work symmetrically.
\myqed
\end{proof}

The next lemma proves $1 \implies 2$ in Proposition~\ref{pro:spec},
which is the most technical part.

\begin{lem}
  \label{lem:1imp2} %
  Let $\bw$ denote an infinite word and $q$, $r$ two quasiperiods of
  $\bw$ of the same length. Suppose that for each pair of integers
  $(m,n)$ such that the proper $3$-overlap $\cv^*_q(m,n)$ exists and
  is a factor of $\bw$, we have $f_{q,r}(m,n) \geq 0$. Then the
  derivated sequences of $\bw$ along $q$ and along $r$ are identical,
  up to a shift of one position.
\end{lem}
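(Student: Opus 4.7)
The plan is as follows. Let $(q_n)_{n\in\zz}$ denote the sequence of positions of $q$ in $\bw$ and set $s_n := |q| - (q_{n+1} - q_n)$, the span of the $n$-th consecutive 2-overlap of $q$. The hypothesis of the lemma implies that $f_{q,r}(s_n, s_{n+1})$ is defined (and nonnegative) for every $n$, so by Lemma~\ref{lem:no-qrrq} each 2-overlap of $q$ that occurs in $\bw$ contains \emph{exactly} one occurrence of $r$; writing $c_m := \occ(q,r,m)$, this occurrence sits at absolute position $\tilde r_n := q_n + c_{s_n}$. Since $|r| = |q|$, every $r$-occurrence of $\bw$ is contained in some 2-overlap of $q$ and must be the unique $r$ there, so $(\tilde r_n)_n$ is, up to an index shift, precisely the sequence of $r$-positions in $\bw$.

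A direct substitution then yields
\[
  \tilde r_{n+1} - \tilde r_n \;=\; (q_{n+1} - q_n) + (c_{s_{n+1}} - c_{s_n}) \;=\; |q| - f_{q,r}(s_n, s_{n+1}),
\]
whereas $q_{n+1} - q_n = |q| - s_n$. The derivated sequence of $r$ thus coincides with that of $q$ at the same index iff $f_{q,r}(s_n, s_{n+1}) = s_n$ for every $n$, and coincides with it after a one-position index shift iff $f_{q,r}(s_n, s_{n+1}) = s_{n+1}$ for every $n$. The lemma reduces to the following combinatorial claim: under the hypothesis $f_{q,r}(s_n, s_{n+1}) \geq 0$ on every transition occurring in $\bw$, one of these two alternatives holds \emph{uniformly} in $n$.

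To settle this dichotomy I would rephrase the two alternatives as ``$n \mapsto c_{s_n}$ is constant'' versus ``$n \mapsto c_{s_n} + s_n$ is constant'', then fix an occurring span as a reference and propagate equalities to every other occurring span by combining the functional implications of Lemma~\ref{lem:lines} (namely $f(m,n) = m \Rightarrow f(n,m) = n$ and $f(m,k) = f(m,l) \Rightarrow f(n,k) = f(n,l)$) with the positivity bound $s_n + c_{s_n} - c_{s_{n+1}} \geq 0$. The main obstacle I foresee is ruling out a \emph{mixed} pattern in which some transitions satisfy $f = s_n$ while others satisfy $f = s_{n+1}$. I expect that such a mixed pattern, pushed through the cycle identity from Lemma~\ref{lem:lines}, would force two distinct occurrences of $r$ inside a single 2-overlap of $q$, contradicting Lemma~\ref{lem:no-qrrq}; this is the concrete obstruction to mixing, and it is where the technical heart of the proof lies.
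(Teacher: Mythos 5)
Your setup is correct, and it is essentially the translation the paper isolates as Lemma~\ref{lem:equiv-f}: writing $s_n$ for the span of the $n^\nth$ consecutive overlap of $q$, one gets $\tilde r_{n+1}-\tilde r_n = |q| - f_{q,r}(s_n,s_{n+1})$, so the lemma is equivalent to the dichotomy ``$f(s_n,s_{n+1})=s_n$ for every $n$, or $f(s_n,s_{n+1})=s_{n+1}$ for every $n$.'' But that dichotomy \emph{is} the content of the lemma, and you leave it unproved: the second half of your proposal is a plan (``I would rephrase\dots'', ``I expect that\dots'') that correctly locates the obstacle (the mixed pattern) without eliminating it. As it stands this is a reduction of the statement to an equivalent statement, not a proof.

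Moreover, the mechanism you conjecture for closing the gap --- that a mixed pattern, pushed through the cycle identity of Lemma~\ref{lem:lines}, forces two occurrences of $r$ inside one proper overlap of $q$ and so contradicts Lemma~\ref{lem:no-qrrq} --- is not how the argument can be made to work. The identities of Lemma~\ref{lem:lines} are purely arithmetic consequences of $f(m,n)=m+\occ(q,r,m)-\occ(q,r,n)$ and are satisfied by many ``mixed'' assignments of the offsets $\occ(q,r,\cdot)$; combined with positivity alone they do not produce a contradiction. The paper's proof orders the occurring spans, takes the smallest one $\sigma_1$, considers the least $j$ with $f(\sigma_1,\sigma_j)\neq\sigma_1$, and then needs three genuinely different arguments: minimality of $\sigma_1$ when $f(\sigma_1,\sigma_j)>\sigma_j$; the construction of an \emph{auxiliary} periodic $q$-quasiperiodic word built from the spans $x,\sigma_1,\sigma_j$, on which $f(x,x)$ would take two distinct values, when $\sigma_1<f(\sigma_1,\sigma_j)<\sigma_j$; and, in the hardest case $f(\sigma_1,\sigma_j)=\sigma_j$, a Fine--Wilf periodicity argument showing that $q$, $r$ and finally all of $\bw$ would be periodic with period $\gcd(\sigma_j-\sigma_1,\sigma_2-\sigma_1)$, which is incompatible with the coexistence of distinct spans. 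None of these reduces to ``two $r$'s in one $q$-overlap,'' so the technical heart of the proof is missing from your proposal and the route you sketch for it would not succeed as stated.
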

\begin{proof}
  Let $\tau_1, \tau_2, \dots, \tau_m$ denote all the integers such
  that the proper overlap $\cv^*_q(\tau_i)$ exists and is a factor of
  $\bw$; sort the $\tau_i$ by increasing length. If $x,y$ are integers
  such that the proper $3$-overlap $\cv^*_q(x,y)$ exists and is a
  factor of $\bw$, then we call $(x,y)$ an \emph{occurring couple}.

  We only need to prove that either for each $x$ such that
  $(\tau_1, x)$ is an occurring couple, we have
  $f(\tau_1,x) = \tau_1$; or that for each such $x$, we have
  $f(x,\tau_1)=\tau_1$. Indeed, if $f(\tau_1,x)=\tau_1$ for all $x$
  (or the opposite one), then for each occurring couple $(x,y)$ we
  have $f(\tau_1,x)=f(\tau_1,y)$; by Lemma~\ref{lem:lines} we deduce
  that $f(x,y)=f(x,x)=x$; subsequently Lemma~\ref{lem:equiv-f} shows
  that this is sufficient to finish our proof. Therefore now we argue
  that $f(\tau_1, x) = \tau_1$ for each $x$ such that $(\tau_1, x)$ is
  an occurring couple.

  Let $\sigma_1 = \tau_1$ and $\sigma_2, \dots, \sigma_n$ all the
  integers such that $(\sigma_1, \sigma_i)$ is an occurring couple;
  sort the $\sigma_i$ by increasing length (in particular
  $(\sigma_i)_{1 \leq i \leq n}$ is a subsequence of
  $(\tau_i)_{1 \leq i \leq m}$). The couple $(\sigma_1, \sigma_1)$ is
  not necessarily an occurring couple, but Lemma~\ref{lem:lines}
  guarantees that $f(\sigma_1, \sigma_1)$ is well-defined and that
  $f(\sigma_1, \sigma_1) = \sigma_1$. We can assume that
  $f(\sigma_1, \sigma_2) = \sigma_1$; if it is not the case, then
  $f(\sigma_2, \sigma_1) = \sigma_1$ by Lemma~\ref{lem:lines} and
  without loss of generality we consider the function
  $f'(x,y) = f(y,x)$ instead of $f$. Now reason by contradiction and
  consider the smallest integer $j$ such that
  $f(\sigma_1, \sigma_j) \neq \sigma_1$. By hypothesis we can rule out
  $f(\sigma_1, \sigma_j) < 0$, so it remains three cases to analyse.
  
  \textbf{Case 1.} If $f(\sigma_1, \sigma_j) > \sigma_j$, then
  use Lemma~\ref{lem:lines} to write
  $\sigma_1 + \sigma_j = f(\sigma_1, \sigma_j) + f(\sigma_j, \sigma_1)$, which
  is equivalent to $f(\sigma_j,\sigma_1)=\sigma_1+\sigma_j-f(\sigma_1,\sigma_j)$.
  The quantity $\sigma_j-f(\sigma_1,\sigma_j)$ is negative so we have
  $f(\sigma_j, \sigma_1) < \sigma_1$, which is a contradiction since
  $\sigma_1$ is the smallest possible span.

  \textbf{Case 2.} If $\sigma_1 < f(\sigma_1, \sigma_j) = x < \sigma_j$,
  then consider the $q$-quasiperiodic word whose derivated sequence is
  ${}^\omega(x\, \sigma_1\, \sigma_j)^\omega$; it would have
  $f(\sigma_1,x) = \sigma_1$ and $f(x,\sigma_1) = x$. Thus
  $f_{q,r}(x,x)=\sigma_1$, but Lemma~\ref{lem:gal-nfo} implies $f_{q,r}(x,x)=x$:
  we have a contradiction.
  
  \textbf{Case 3.} Finally, suppose we have
  $f(\sigma_1, \sigma_j) = \sigma_j$. Recall that $j$ is minimal and that $j > 2$.
  By Lemma~\ref{lem:gal-nfo} we have
  $f(\sigma_j, \sigma_1) = \sigma_1$, and by Lemma~\ref{lem:lines}
  the relations $f(\sigma_1,\sigma_1) = f(\sigma_1,\sigma_2)$ and
  $f(\sigma_j, \sigma_1) = \sigma_1$ imply that
  $f(\sigma_j, \sigma_2) = \sigma_1$ as well. Therefore we have four
  $3$-overlaps of $q$ with spans $(\sigma_1, \sigma_1)$;
  $(\sigma_1, \sigma_2)$; $(\sigma_j, \sigma_1)$; and
  $(\sigma_j, \sigma_2)$; all with the same induced $r$-overlap, which
  has span $\sigma_1$.

  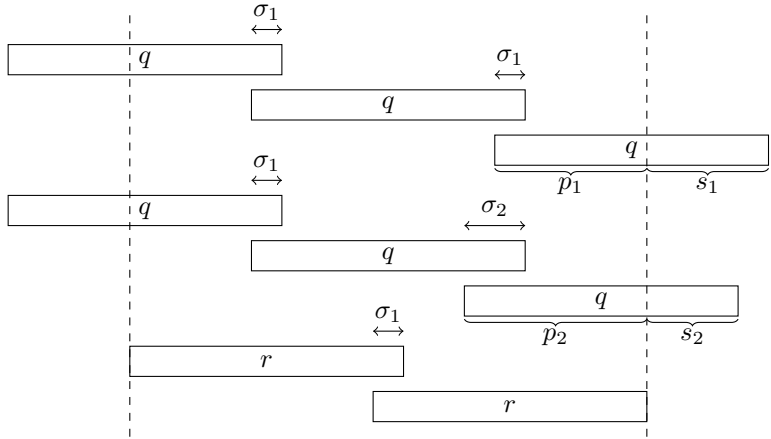
\begin{figure}[htbp]
    \centering
    \begin{tikzpicture}[scale=0.4]
      \draw (-8, 8  ) rectangle node{$q$} ++(9,1);
      \draw ( 0, 6.5) rectangle node{$q$} ++(9,1);
      \draw ( 8, 5  ) rectangle node{$q$} ++(9,1);

      \draw (-8, 3  ) rectangle node{$q$} ++(9,1);
      \draw ( 0, 1.5) rectangle node{$q$} ++(9,1);
      \draw ( 7, 0  ) rectangle node{$q$} ++(9,1);

      \draw (-4, -2  ) rectangle node{$r$} ++(9,1);
      \draw ( 4, -3.5) rectangle node{$r$} ++(9,1);

      \draw[dashed] (-4,-4) -- (-4,10);
      \draw[dashed] (13,-4) -- (13,10);

      \draw[<->] (+4,-0.5) -- node[above]{$\sigma_1$} ++(1,0);
      \draw[<->] ( 0, 4.5) -- node[above]{$\sigma_1$} ++(1,0);
      \draw[<->] ( 0, 9.5) -- node[above]{$\sigma_1$} ++(1,0);
      \draw[<->] ( 7, 3  ) -- node[above]{$\sigma_2$} ++(2,0);
      \draw[<->] ( 8, 8  ) -- node[above]{$\sigma_1$} ++(1,0);

      \draw[decorate,decoration={brace,mirror},yshift=-1mm] (8,5) -- node[below]{$p_1$} ++(5,0);
      \draw[decorate,decoration={brace,mirror},yshift=-1mm] (13,5) -- node[below]{$s_1$} ++(4,0);
      
      \draw[decorate,decoration={brace,mirror},yshift=-1mm] (7,0) -- node[below]{$p_2$} ++(6,0);
      \draw[decorate,decoration={brace,mirror},yshift=-1mm] (13,0) -- node[below]{$s_2$} ++(3,0);
    \end{tikzpicture}
    \caption[Illustration of the proof of
    Lemma~\ref{lem:1imp2}]{Illustration of Case~3 in the proof of
      Lemma~\ref{lem:1imp2}}
    \label{fig:main-fat}
  \end{figure}

  From $f(\sigma_1,\sigma_1) = f(\sigma_1, \sigma_2) = \sigma_1$ we
  deduce that $\cv_q(\sigma_1,\sigma_1)$ and $\cv_q(\sigma_1,\sigma_2)$ have a
  common factor $\cv_r(\sigma_1)$. In either case, the middle occurrence
  of $q$ is contained in this factor and last occurrence of $q$ starts
  in this factor. This situation is displayed on
  Figure~\ref{fig:main-fat}. There exist words $s_1, s_2$ such that
  $Q(\sigma_1)$ is a suffix of $\cv_r(\sigma_1)s_1$, and $\cv_q(\sigma_2)$ is
  a suffix of $\cv_r(\sigma_1)s_2$. Call $p_1, p_2$ the words satisfying
  $p_1s_1 = p_2s_2 = q$, and without loss of generality suppose that
  $|p_2|>|p_1|$. Observe that both $p_1$ and $p_2$ are suffixes of the
  same word $\cv_r(\sigma_1)$, and $|p_1|=|p_2|+\sigma_2-\sigma_1$.  As
  $p_1,p_2$ are also both prefixes of $q$, we deduce that $p_2$ has a
  period $\sigma_2 - \sigma_1$. From
  $f(\sigma_1, \sigma_2) = f(\sigma_j, \sigma_2) = \sigma_1$ the same
  argument proves that there exist words $p'_1, p'_2$ such that
  $\cv_q(\sigma_1)$ is a prefix of $p'_1\cv_r(\sigma_1)$, and $\cv_q(\sigma_j)$
  is
  a prefix of $p'_2\cv_r(\sigma_1)$. Call $s'_1, s'_2$ the words
  satisfying $p'_1s'_1 = p'_2s'_2 = q$, and without loss of generality
  suppose that $|s'_2| > |s'_1|$. Then remark that $s'_2$ has a period
  $\sigma_j - \sigma_1$. In order to simplify notation in the rest of
  the proof, let $\alpha = p_2$ and $\beta = s'_1$.

  We have
  $|\cv_r(\sigma_1)|=2|q|-\sigma_1=|q|+|\alpha|+|\beta|-\sigma_2-\sigma_j$.
  Therefore $|q|+\sigma_2+\sigma_j - \sigma_1 = |\alpha|+|\beta|$.
  Since $\alpha$ is a prefix and $\beta$ a suffix of $q$, by a length
  argument $\alpha$ has a non-empty suffix which is a prefix of
  $\beta$; call it $\theta$. We have
  $|\theta| = |\alpha|+|\beta|-|q| = \sigma_j+\sigma_2-\sigma_1 \geq
  (\sigma_j - \sigma_1) + (\sigma_2 - \sigma_1)$. By the Fine-Wilf
  Theorem (see~\cite[Prop.~1.3.5 and its proof]{Lothaire1983}),
  $\theta$ has a period $\delta$ of length
  $|\delta| = \gcd(\sigma_j-\sigma_1, \sigma_2-\sigma_1)$. A period of
  $\alpha$ is a suffix of $\theta$ and a period of $\beta$ is a prefix
  of $\theta$; by a divsibility argument, each of these periods is
  itself $|\delta|$-periodic, therefore $q$ is $|\delta|$-periodic.
  Besides, observe that $\beta$ is a prefix and $\alpha$ a suffix of
  $r$, therefore $r$ is $|\delta|$-periodic as well.

  Let us show that, for all $n$, we have $\bw(n) = \bw(n+|\delta|)$.
  The only case where this could fail is if $n$ and $n+|\delta|$ do
  not belong to the same occurrence of $q$ nor to the same occurrence
  of $r$ in $\bw$. Since
  $|\delta|=\gcd(\sigma_j-\sigma_1, \sigma_2-\sigma_1) \leq \sigma_2$,
  the only possible span for the $q$-overlap (and the $r$-overlap)
  covering positions $n$ and $n+|\delta|$ is $\sigma_1$.  But then,
  the Figure~\ref{fig:main-fat} shows that the prefix of length
  $|\cv_q(\sigma_1)|-|s_1|$ of $\cv_q(\sigma_1)$ and the prefix of length
  $|\cv_q(\sigma_2)|-|s_2|$ of $\cv_q(\sigma_2)$ are equal; since the latter
  one is $|\delta|$-periodic, the former one is as well. By a length
  argument, $\bw(n) = \bw(n+|\delta|)$ in any case and $\bw$ is
  periodic: a contradiction.
\myqed
\end{proof}

\noindent
Finally, the next lemma gives $2 \iff 3$ in
Proposition~\ref{pro:spec}.

\begin{lem}
  \label{lem:2iff3}
  Let $\bw$ denote a biinfinite word and $q$, $r$ denote two
  quasiperiods of $\bw$ of the same length. The derivated sequences of
  $\bw$ along $q$ and along $r$ are equal if and only if, up to
  swapping $q$ and $r$, there exists a chain of quasiperiods
  $u_1, \dots, u_k$ with $u_1=q$ and $u_k=r$, such that $u_{i+1}$ is
  the successor of $u_i$.
\end{lem}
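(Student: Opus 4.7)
The plan is to prove each direction by translating between ``equal derivated sequences'' and ``existence of a chain of successors.'' The bridge is that each step $u_i \to u_{i+1}$ of a chain corresponds to shifting the set of occurrences by exactly one position, which is essentially the content of Proposition~\ref{pro:predsuc}.

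For the \emph{reverse} direction, suppose $q = u_1, u_2, \ldots, u_k = r$ is a chain of quasiperiods with $u_{i+1}$ the successor of $u_i$. Because $u_{i+1}$ is both a quasiperiod and a successor of $u_i$, Proposition~\ref{pro:predsuc} yields that $u_i$ is not right special, so every occurrence of $u_i$ at position $p$ extends uniquely to an occurrence of $u_{i+1}$ at $p+1$. Symmetrically, for $i \geq 1$ the quasiperiod $u_{i+1}$ admits the quasiperiod $u_i$ as a predecessor, so by the predecessor version of Proposition~\ref{pro:predsuc}, $u_{i+1}$ is not left special; hence every occurrence of $u_{i+1}$ arises that way. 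Thus the set of positions of $u_{i+1}$ equals that of $u_i$ shifted by $+1$. Iterating, the positions of $r$ are the positions of $q$ shifted by $k-1$, so the derivated sequences agree up to an index shift.

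For the \emph{forward} direction, write $(q_n)$ and $(r_n)$ for the position sequences. Equal derivated sequences force $r_n - q_n$ to be a constant $t$; after swapping $q$ and $r$ if necessary, assume $t \geq 0$. The indexing convention $q_0, r_0 \in (-|q|, 0]$ ensures $|t| < |q|$. If $t = 0$ the positions coincide, forcing $q = r$ and a trivial chain. For $t \geq 1$, define $u_i$, $1 \leq i \leq t+1$, as the length-$|q|$ factor of $\bw$ starting at position $q_0 + i - 1$, so that $u_1 = q$ and $u_{t+1} = r$. The letters of $\bw$ in $[q_n, q_n + |q| + t - 1]$ are fully determined, independently of $n$, by the overlapping occurrences of $q$ at $q_n$ and of $r$ at $q_n + t$; therefore $u_i$ occurs at every $q_n + i - 1$, and these positions cover $\bw$. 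Hence each $u_i$ is a quasiperiod, and the length-$(|q|+1)$ factor starting at $q_0 + i - 1$ witnesses that $u_{i+1}$ is the successor of $u_i$.

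The \emph{main obstacle} is to ensure the constructed $u_i$'s are pairwise distinct, so the chain is genuine in the sense of Section~\ref{sec:det}. An equality $u_i = u_j$ with $i<j$ would, by concatenating the resulting letter identities across all $q$-occurrences (whose covering intervals overlap since $q_{n+1} - q_n \leq |q|$), force $\bw$ to have period $j-i$. In an aperiodic $\bw$ this is immediate contradiction; in a word of smallest period $p$, the indexing convention further restricts $q_0, r_0 \in (-|q|,\, p - |q|]$, so $|t| < p$, giving $j - i \leq t < p$ while $p \mid (j-i)$, again a contradiction. Either way the $u_i$ are distinct, completing the proof.
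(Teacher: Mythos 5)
Your proof is correct and follows essentially the same route as the paper's: the forward direction builds the intermediate quasiperiods $u_i$ as the length-$|q|$ factors of the fixed window determined by the constant offset $t = r_n - q_n$, and the reverse direction uses Proposition~\ref{pro:predsuc} to show each chain step shifts the occurrence set by exactly one position. The only (harmless) differences are that you exclude stray occurrences of $r$ via the ``not left special'' half of Proposition~\ref{pro:predsuc} where the paper invokes Lemma~\ref{lem:no-qrrq}, and you add a distinctness argument for the $u_i$ that the statement does not actually require.
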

\begin{proof}
  Let $(q_n)_{n \in \zz}$ and $(r_n)_{n \in \zz}$ denote the sequences
  of \emph{positions} of $q$ and $r$ in $\bw$. If the derivated
  sequences are equal up to a shift of one position, then there exists
  an integer $k$ in $\{1, \dots, |q|-1\}$ such that for all $n$ we
  have $r_n = q_n + k$.  In particular $r$ always starts at the same
  position inside $q$. Differently put, this means that there exists a
  word $s$ of length $k$ such that $r$ is a suffix of $qs$ and each
  occurrence of $q$ in $\bw$ is the prefix of an occurrence of $qs$.
  Set $u_i = (q s)(i, \dots, i+|r|-1)$ and the implication is proved.
  
  Conversely suppose that there is a family of quasiperiods
  $u_0, u_1, \dots, u_{k-1}$ such that $u_0 = q$ and $u_{k-1} = r$ and
  $u_{i+1}$ is the unique successor of $u_i$ for each
  $0 \leq i < k-1$. By Proposition~\ref{pro:predsuc}, none of the
  $u_i$ is right special except maybe $u_{k-1}$. Therefore, there is
  an occurrence of $r$ exactly $k-1$ positions after each occurrence
  of $q$ in $\bw$.  Lemma~\ref{lem:no-qrrq} ensures that no other
  occurrences of $r$ appear in $\bw$, therefore we can conclude that
  $r_n = q_n+k-1$ for each integer $n$.  \myqed
\end{proof}

Finally, the next proposition characterizes compatible and defined
couples of quasiperiods.

\begin{pro}
  Let $q,r$ denote two words of the same length. The couple $(q,r)$ is
  definite if and only if each $q$-quasiperiodic biinfinite word
  contains infinitely many occurrences of $r$.
\end{pro}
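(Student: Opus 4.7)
The plan is to handle each direction of the biconditional with its own construction, each built from $f_{q,r}$ and the overlap spans between consecutive occurrences of $q$ in $\bw$. For the forward direction, I would assume $(q,r)$ is definite and let $\bw$ be a $q$-quasiperiodic biinfinite word. I would enumerate the positions of $q$ in $\bw$ as $(p_i)_{i \in \zz}$; since $\bw$ is biinfinite and every position is covered by an occurrence of $q$, this sequence is unbounded in both directions. Setting $m_i = |q| - (p_{i+1} - p_i)$, any three consecutive $q$-occurrences realise a proper $3$-overlap $\cv^*_q(m_i, m_{i+1})$ as a factor of $\bw$. Definiteness guarantees that $f_{q,r}(m_i, m_{i+1})$ is defined, so $\cv^*_q(m_i)$ contains an occurrence of $r$ by the very definition of $f$, yielding an $r$-occurrence at position $p_i + \occ(q, r, m_i)$ of $\bw$. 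Because $\occ(q, r, m_i) < |q|$ is bounded while $p_i \to \pm\infty$, these positions form an unbounded set, so $\bw$ has infinitely many occurrences of $r$.

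For the converse I would argue contrapositively: suppose $(q,r)$ is not definite, and pick $m, n$ such that $\cv^*_q(m, n)$ exists as a proper $3$-overlap while $f_{q,r}(m, n)$ is undefined. By the definition of $f$, at least one of $\cv^*_q(m)$ and $\cv^*_q(n)$ contains no occurrence of $r$; relabel so that it is $\cv^*_q(m)$. I would then build the biinfinite periodic word $\bw_\infty = \cv^*_q(\dots, m, m, m, \dots)$ obtained by iterating the single span $m$ in both directions; by construction $\bw_\infty$ is $q$-quasiperiodic, and every length-$|q|$ window of $\bw_\infty$ lies inside some translate of $\cv^*_q(m)$, using that $\bw_\infty$ has period $|q|-m$ and $|\cv^*_q(m)| = 2|q| - m$. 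Since $\cv^*_q(m)$ contains no occurrence of $r$, neither does $\bw_\infty$. Thus $\bw_\infty$ is a $q$-quasiperiodic biinfinite word with zero $r$-occurrences, contradicting the right-hand side.

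The main obstacle is the small technical point hiding in the converse: one must verify that $\cv^*_q(m, m)$ is genuinely a proper $3$-overlap of $q$, i.e., that iterating a single proper $2$-overlap does not create a spurious $q$-occurrence at the gluing (so that $\bw_\infty$ really is the $q$-covered periodic word that I claim, and in particular $\cv^*_q(m, m)$ appears as a factor of the constructed word). I would handle this by observing that any additional $q$-occurrence inside $\cv^*_q(m, m)$ would necessarily lie strictly inside one of the two shifted copies of $\cv^*_q(m)$ that compose it, contradicting the proper $2$-overlap status of $\cv^*_q(m)$. Once this bookkeeping is in place, both directions reduce to straightforward tracking of positions along the $(|q|-m)$-periodic structure of $\bw_\infty$, respectively along the span sequence $(m_i)$ of $\bw$.
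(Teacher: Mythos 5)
Your proof is correct and follows essentially the same route as the paper: the forward direction extracts an occurrence of $r$ from each proper overlap of $q$ via the definedness of $f_{q,r}$, and the converse builds the periodic word $\cv^*_q(\dots, m, m, \dots)$ from a span $m$ whose proper overlap misses $r$. Your extra bookkeeping (the gluing check for $\cv^*_q(m,m)$ and the window argument showing every length-$|q|$ factor sits inside a translate of $\cv^*_q(m)$) is sound and only makes explicit what the paper leaves implicit.
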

\begin{proof}
  If $(q,r)$ is definite, then $f_{q,r}(m,n)$ is defined whenever the
  proper $3$-overlap $w=\cv^*_q(m,n)$ exists; therefore each proper
  overlap of $q$ contains one occurrence of $r$. If a biinfinite word
  is $q$-quasiperiodic, then it contains infinitely many occurrences
  of proper overlaps of $q$, and therefore infinitely many occurrences
  of $r$.
  
  Conversely, suppose that each $q$-quasiperiodic infinite word
  contains infinitely many occurrences of $r$. Suppose that $m$ is an
  integer such that the proper overlap $\cv^*_q(m)$ exists, but does
  not contain an occurrence of $r$. Then the periodic biinfinite word
  given by $\bw = \cv^*_q(\dots m, m, m \dots)$ does not contain any
  occurrence of $r$; but it should also contain infinitely many
  occurrences of $r$ by hypothesis. Therefore we have a contradiction
  and each proper overlap $\cv^*_q(m)$ contains an occurrence of $r$,
  which shows that $(q,r)$ is definite.
\end{proof}

\section{Quasiperiods of biinfinite Sturmian words}
\label{sec:sturm}

If $\bw$ is an infinite word (either indexed by $\nn$ or by $\zz$),
then $P_\bw(n)$ denotes the number of distinct factors of length $n$
in $\bw$ and $P_\bw$ is the \emph{complexity function} of $\bw$. An
infinite word is \emph{Sturmian} if and only if it is not ultimately
periodic and satisfies $P_\bw(n)=n+1$ for each integer
$n$. Equivalently, a word is Sturmian if and only if it is not
eventually periodic and has exactly one right special factor and one
left special factor of each length.  Sturmian words are an important
and well-studied class of infinite words (see~\cite[Chapter
2]{LothaireAlgebraic} and~\cite[Chapter 6]{PytheasFogg2002}). Now we
determine the set of quasiperiods of any biinfinite Sturmian word.  To
this end, if $\bw$ is a biinfinite word, let $Q_\bw(n)$ denote the
number of quasiperiods of length $n$ in $\bw$.

\begin{thm}
  \label{thm:main}
  Let $\bw$ denote a biinfinite Sturmian word and $n$ a nonnegative
  integer.
  \begin{enumerate}
  \item We have $Q_\bw(n) = 0$ if and only if $\bw$ has a nonempty
    bispecial factor of length $n-1$.
  \item If $Q_\bw(n)>0$ and $s$ denotes the shortest bispecial factor
    of $\bw$ with $|s|\geq n$, then quasiperiods of length $n$ in
    $\bw$ are exactly the factors of length $n$ in $s$.
  \end{enumerate}
\end{thm}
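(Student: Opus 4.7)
The plan is to exploit the fact that a biinfinite Sturmian word has exactly one left-special factor and exactly one right-special factor of each length $n$, which I denote $L_n$ and $R_n$, combined with the chain structure of quasiperiods established in Proposition~\ref{pro:predsuc}.

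First I would show that in a Sturmian word the length-$n$ quasiperiods (if any) form a single chain $u_1, \dots, u_k$ with $u_1 = L_n$ and $u_k = R_n$, by plugging uniqueness of $L_n$ and $R_n$ into Proposition~\ref{pro:predsuc}. Denoting the corresponding chain word by $s$ (so that its length-$n$ factors are exactly $u_1, \dots, u_k$), I would then establish the elementary Sturmian fact that the length-$(m-1)$ prefix of $L_m$ is itself left-special and hence equal to $L_{m-1}$, with the symmetric statement for right-specials. Using this as induction step, I would prove that for each $m \in [n, |s|]$ the prefix of $s$ of length $m$ is $L_m$ and the suffix of length $m$ is $R_m$. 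In particular $s = L_{|s|} = R_{|s|}$ is bispecial, and none of the $L_m$ for $n \leq m < |s|$ is right-special (otherwise the chain would end before step $k$), so no bispecial has length in the interval $[n, |s|)$. This identifies $s$ with the shortest bispecial of length at least $n$ and settles Statement~2 conditionally on the chain existing.

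For Statement~1 I would then characterize the existence of the chain, equivalently of $L_n$ being a quasiperiod. In the ``impossibility'' direction, assuming a bispecial factor $b$ of length $n - 1$, the above prefix/suffix property forces $L_n = b\alpha$ and $R_n = \beta b$ for some letters $\alpha, \beta$. Since $b$ is both left-special and right-special, there are occurrences of $b$ in $\bw$ with ``wrong'' left or right neighbours, and I would argue that covering the adjacent ``wrong-letter'' position with any length-$n$ window of the putative chain word $s$ imposes a period on $s$ via the Fine--Wilf lemma; this period would then propagate to $\bw$ itself, contradicting aperiodicity. The model for this argument is the periodicity contradiction in the proof of Lemma~\ref{lem:1imp2}.

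In the converse direction, assuming no bispecial of length $n - 1$, I would construct the chain explicitly by iterating the successor operation of Proposition~\ref{pro:predsuc} from $L_n$ until reaching $R_n$. The crux is verifying that $L_n$ is indeed a quasiperiod: using Vuillon's theorem (every factor of a Sturmian word has exactly two return words) and an explicit analysis of the return words of $L_n$ when $L_{n-1}$ is not bispecial, I would bound both return word lengths by $n$. I expect the impossibility direction to be the main obstacle: turning the combinatorial clash at the wrong-letter position into a clean periodicity argument requires care, whereas the identification of $s$ with the shortest bispecial is largely mechanical once the chain structure is in place.
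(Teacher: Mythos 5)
Your Statement~2 argument and your existence direction for Statement~1 are essentially the paper's proof in different clothes: the paper also reduces Statement~2 to the chain structure given by Proposition~\ref{pro:predsuc} plus uniqueness of special factors (your prefix/suffix induction just expands a step the paper leaves implicit), and its existence argument counts the two cycle lengths $k+\ell$ and $k+m$ in the Rauzy graph of order $n$ (Figure~\ref{fig:graphsturm}), which is exactly the return-word computation you propose via Vuillon's theorem, only self-contained. Those parts are fine.

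The genuine problem is your impossibility direction. You plan to derive a period of the chain word $s$ by Fine--Wilf and then ``propagate'' it to $\bw$ to contradict aperiodicity, modelled on Lemma~\ref{lem:1imp2}. As sketched this has a gap: a period of the finite word $s$ does not extend to all of $\bw$ without the kind of painstaking position-by-position argument carried out in Lemma~\ref{lem:1imp2}, and nothing in your sketch supplies it. More importantly, the contradiction here is local, not global, and the tool you need is already in the paper: the proof of Proposition~\ref{pro:predsuc} (via Lemma~\ref{lem:patrice}) shows that if $v$ is left special then no word of the form $\gamma v$ can be a quasiperiod --- any occurrence of $\gamma' v$ with $\gamma'\neq\gamma$ would sit inside an overlap of $\gamma v$, and the letter-counting in Lemma~\ref{lem:patrice} forces $\gamma=\gamma'$. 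Now if $b$ is bispecial of length $n-1$, the unique right special factor of length $n$ is $\beta b$ with $b$ left special, so $\beta b$ is not a quasiperiod; and by iterating successors in Proposition~\ref{pro:predsuc}, any length-$n$ quasiperiod would force a right special length-$n$ quasiperiod. That closes the direction in two lines, whereas your Fine--Wilf-plus-aperiodicity route both leaves the propagation step unproved and re-derives Lemma~\ref{lem:patrice} the hard way. Replace that paragraph of your plan with the appeal to Proposition~\ref{pro:predsuc} and the proof is complete.
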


\begin{proof}
  We prove the two statements separately.

  \emph{Statement~1}. If $\bw$ has a nonempty bispecial factor
  of length $n-1$, say $u$, then there exists a letter $\alpha$ such
  that the (unique) right special factor of length $n$ in $\bw$ is
  $\alpha{}u$, because any suffix of a right special factor is also
  right special. By Theorem~\ref{thm:old} the word $\alpha{}u$ is not
  a quasiperiod of $\bw$; by Proposition~\ref{pro:predsuc} if $\bw$
  had a quasiperiod of length $n$, then it would have a right special
  quasiperiod of this length. Consequently $\bw$ has no quasiperiod of
  length $n$.

  Conversely suppose that $\bw$ has no bispecial factor of length
  $n-1$. Since $\bw$ is Sturmian, it has exactly one right-special and
  one left-special factor of length $n$, so its set of factors of
  length $n$ may be written
  $\{c_0, \dots, c_{k-1}\} \cup \{d_0, \dots, d_{\ell-1}\} \cup \{e_0,
  \dots, e_{m-1}\}$,
  where $c_{k-1}$ is right special and has successors $d_0$ and $e_0$;
  both $d_{\ell-1}$ and $e_{m-1}$ have successor $c_0$, which is left
  special; each other $c_i$, $d_i$ and $e_i$ has respectively
  $c_{i+1}$, $d_{i+1}$ and $e_{i+1}$ as an (unique) successor.  We
  have $k \geq 1$, but we might have $m=0$ or $\ell=0$.
  Figure~\ref{fig:graphsturm} shows a graph of the ``successor''
  relation. Observe that $k+\ell+m = n+1$ and $k > 2$ (otherwise, we
  would have a bispecial factor of length $n-1$). As a consequence,
  the maximal distance between two consecutive occurrences of $c_0$ is
  $\max(k+\ell, k+m)$, which is bounded by $n$. In other terms, $c_0$
  is a quasiperiod of $\bw$.

  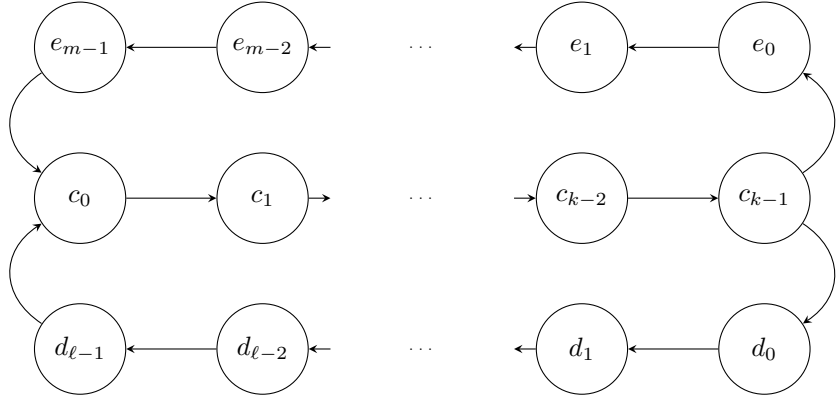
\begin{figure}[htbp]
    \centering
    \begin{tikzpicture}[xscale=3,yscale=2,minimum size=1.2cm,>=stealth]
      \node[draw,circle] (q1) at (0,0) {$c_0$};
      \node[draw,circle] (q2) at (0.8,0) {$c_1$};
      \node (qint1) at (1.3,0) {};
      \node (qdots) at (1.5,0) {{\tiny $\dots$}};
      \node (qint2) at (1.7,0) {};
      \node[draw,circle] (q3) at (2.2,0) {$c_{k-2}$};
      \node[draw,circle] (q4) at (3,0) {$c_{k-1}$};

      \path[draw,->] (q1) to (q2);
      \path[draw,->] (q2) to (qint1);
      \path[draw,->] (qint2) to (q3);
      \path[draw,->] (q3) to (q4);
      
      \node[draw,circle] (t1) at (3,-1) {$d_0$};
      \node[draw,circle] (t2) at (2.2,-1) {$d_1$};
      \node (tint1) at (1.7,-1) {};
      \node (tdots) at (1.5,-1) {{\tiny $\dots$}};
      \node (tint2) at (1.3,-1) {};
      \node[draw,circle] (t3) at (0.8,-1) {$d_{\ell-2}$};
      \node[draw,circle] (t4) at (0,-1) {$d_{\ell-1}$};

      \path[draw,->] (q4) edge[bend left=45] (t1);
      \path[draw,->] (t1) to (t2);
      \path[draw,->] (t2) to (tint1);
      \path[draw,->] (tint2) to (t3);
      \path[draw,->] (t3) to (t4);
      \path[draw,->] (t4) edge[bend left=45] (q1);

      \node[draw,circle] (b1) at (3,+1) {$e_0$};
      \node[draw,circle] (b2) at (2.2,+1) {$e_1$};
      \node (bint1) at (1.7,+1) {};
      \node (bdots) at (1.5,+1) {{\tiny $\dots$}};
      \node (bint2) at (1.3,+1) {};  
      \node[draw,circle] (b3) at (0.8,+1) {$e_{m-2}$};
      \node[draw,circle] (b4) at (0,+1) {$e_{m-1}$};

      \path[draw,->] (q4) edge[bend right=45] (b1);
      \path[draw,->] (b1) to (b2);
      \path[draw,->] (b2) to (bint1);
      \path[draw,->] (bint2) to (b3);
      \path[draw,->] (b3) to (b4);
      \path[draw,->] (b4) edge[bend right=45] (q1);
    \end{tikzpicture}
    \caption[Rauzy graph of a Sturmian word]{Successor graph (usually
      called \emph{Rauzy graph}) of factors of length $n$ of a
      Sturmian word}
    \label{fig:graphsturm}
  \end{figure}

  \emph{Statement~2}. Let $\bw$ denote a biinfinite Sturmian
  word and suppose that $Q_\bw(n)>0$ for some integer $n$. The left
  special and the right special factors of length $n$ of $\bw$, call
  them $\ell$ and $r$, are both quasiperiods by
  Proposition~\ref{pro:predsuc}. Call $s$ the shortest factor of $\bw$
  having $\ell$ as a prefix and $r$ as a suffix. The set of factors of
  length $n$ of $s$ is given by a sequence $u_1, u_2, \dots u_k$,
  where $u_1 = \ell$ and $u_k = r$, such that $u_{i+1}$ is the
  successor of $u_i$ for each $1 \leq i < k$. By
  Proposition~\ref{pro:predsuc} again, the set $\{u_1, \dots, u_k\}$
  is the set of quasiperiods of length $n$ of $\bw$. Observe that $s$
  is, by definition, exactly the shortest bispecial factor of $\bw$
  not shorter than $n$.
\end{proof}

Since any Sturmian word has infinitely many bispecial factors, whose
difference between consecutive lengths are unbounded, we have:

\begin{cor}
  Each biinfinite Sturmian word $\bw$ has infinitely many quasiperiods.
  Moreover, $Q_\bw$ is unbounded.
\end{cor}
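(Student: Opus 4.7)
The plan is to deduce both statements directly from Theorem~\ref{thm:main} combined with the classical structure theorem for bispecial factors of a Sturmian word: the set of lengths of nonempty bispecial factors of $\bw$ forms an infinite, strictly increasing sequence $b_1 < b_2 < \cdots$ satisfying $\limsup_k (b_{k+1} - b_k) = \infty$. Indeed, bispecial factors of Sturmian words are the \emph{central words} associated to the principal convergents of the slope, and their lengths can be read off from the continued fraction expansion of the slope (see e.g.~\cite[Chapter~2]{LothaireAlgebraic} or~\cite[Chapter~6]{PytheasFogg2002}), from which the unbounded gaps are immediate.

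For the first assertion, Theorem~\ref{thm:main}(1) yields $Q_\bw(n) > 0$ for every integer $n$ such that $n-1 \notin \{b_1, b_2, \dots\}$. As this exceptional set is sparse inside $\nn$, its complement is infinite, so $Q_\bw(n) \geq 1$ holds for infinitely many lengths $n$. Since quasiperiods of distinct lengths are distinct, $\bw$ admits infinitely many quasiperiods.

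For the unboundedness of $Q_\bw$, I will exhibit an explicit sequence of lengths $n_k$ with $Q_\bw(n_k) \to \infty$. Using $\limsup_k (b_{k+1} - b_k) = \infty$, extract an increasing sequence of indices $(j_k)$ such that $b_{j_k+1} - b_{j_k} \geq 2$ and $b_{j_k+1} - b_{j_k} \to \infty$, and set $n_k = b_{j_k} + 2$. Then $n_k - 1 = b_{j_k} + 1$ lies strictly between two consecutive bispecial lengths, so it is not a bispecial length, and Theorem~\ref{thm:main}(1) gives $Q_\bw(n_k) > 0$. The shortest bispecial factor of length $\geq n_k$ has length $b_{j_k+1}$, so Theorem~\ref{thm:main}(2) identifies the quasiperiods of length $n_k$ with the factors of length $n_k$ of that bispecial factor, of which there are exactly $b_{j_k+1} - n_k + 1 = b_{j_k+1} - b_{j_k} - 1$. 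By the choice of $(j_k)$, this quantity tends to infinity, proving that $Q_\bw$ is unbounded.

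The only nontrivial ingredient, and hence the main obstacle, is the combinatorial fact on bispecial factors of Sturmian words invoked at the outset; once this is granted, both conclusions reduce to elementary counting inside Theorem~\ref{thm:main}.
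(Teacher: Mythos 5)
Your proof is correct and follows essentially the same route as the paper, which derives the corollary in a single sentence from Theorem~\ref{thm:main} together with the fact that the lengths of bispecial factors of a Sturmian word form an infinite sequence with unbounded gaps. You simply make explicit the counting ($Q_\bw(n) = b - n + 1$ where $b$ is the next bispecial length) that the paper leaves implicit.
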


\section{Conclusion}
\label{sec:conclu}

As explained in the introduction, the biinfinite case may give nicer results about
quasiperiodicity of subshifts and of Sturmian words. This paper provided a toolbox to
study the quasiperiods of biinfinite words, but many questions are still to be answered.

\begin{enumerate}
\item An $\nn$-word $\bw$ is periodic if and only if $Q_\bw(n)>0$ for each large enough
  $n$. Is it possible to characterize ultimately periodic $\zz$-words in terms of
  quasiperiods?
\item An $\nn$-word $\bw$ is standard Sturmian if and only if it satisfies $Q_\bw(n)=0$
  exactly when there is a bispecial factor of length $n-1$ in $\bw$. Is there a
  characterization of biinfinite Sturmian words in terms of quasiperiods?
\item What about other families of low-complexity sequences, such as episturmian or
  Arnoux-Rauzy sequences?
\item If an $\nn$-word is multi-scale quasiperiodic, then it is
  uniformly recurrent~\cite{MarcusMonteil2006Arxiv}. It is easy to
  construct a $\zz$-word which is multi-scale quasiperiodic but not
  uniformly recurrent: the word ${}^\omega(ba)\cdot(ab)^\omega$ has
  quasiperiod $a(ba)^n$ for each positive $n$, but the factor $aa$
  occurs only once. Are all such words ultimately periodic? If not,
  how can they be characterized?
\item If a biinfinite word has infinitely many quasiperiods, does it necessarily have
  quasiperiodic derivated sequences? If not, can we build a counter-example which is
  uniformly recurrent?
\end{enumerate}

\paragraph{Acknowledgements.}
The authors would like to thank Gwena{\"e}l Richomme for the proof of
Lemma~\ref{lem:no-qrrq} and for proofreading many versions of this paper,
which notably helped to clarify the statement of Theorem~\ref{thm:main}.
The authors are also grateful to Patrice S{\'e}{\'e}bold, who provided
Lemma~\ref{lem:patrice} and its proof.

\bibliographystyle{plain}
\bibliography{bisturm}

\begin{thebibliography}{10}

\bibitem{ApostolicoEhrenfeucht1993Tcs}
Alberto Apostolico and Andrzej Ehrenfeucht.
\newblock Efficient detection of quasiperiodicities in strings.
\newblock {\em Theor. Comput. Sci.}, 119:247--265, 1993.

\bibitem{PytheasFogg2002}
Pytheas Fogg.
\newblock {\em Substitutions in Dynamics, Arithmetics, and Combinatorics},
  volume 1794 of {\em Lecture Notes in Mathematics}.
\newblock Springer, 2002.

\bibitem{GamardRichomme2015Lata}
Guilhem Gamard and Gwena{\"e}l Richomme.
\newblock Coverability in two dimensions.
\newblock In {\em Language and Automata Theory and Applications - 9th
  International Conference, {LATA} 2015, March 2-6, 2015 - Nice, France}, 2015.

\bibitem{GamardRichomme2016Mfcs}
Guilhem Gamard and Gwena{\"e}l Richomme.
\newblock Determining sets of quasiperiods of infinite words.
\newblock In {\em 41st International Symposium on Mathematical Foundations of
  Computer Science, {MFCS} 2016, August 22-26, 2016 - Krak{\'{o}}w, Poland},
  2016.

\bibitem{GlenLeveRichomme2008Tcs}
Amy Glen, Florence Lev{\'e}, and Gwena{\"e}l Richomme.
\newblock Quasiperiodic and {L}yndon episturmian words.
\newblock {\em Theor. Comput. Sci.}, 409(3):578--600, 2008.

\bibitem{IliopoulosMouchard1999Jalc}
Costas~S. Iliopoulos and Laurent Mouchard.
\newblock Quasiperiodicity: from detection to normal forms.
\newblock {\em J. Autom. Lang. Comb.}, 4(3):213--228, 1999.

\bibitem{LeveRichomme2007Tcs}
Florence Lev{\'e} and Gwena{\"e}l Richomme.
\newblock Quasiperiodic {S}turmian words and morphisms.
\newblock {\em Theor. Comput. Sci.}, 372(1):15--25, 2007.

\bibitem{Lothaire1983}
Lothaire.
\newblock {\em Combinatorics on Words}, volume~17 of {\em Encyclopedia of
  Mathematics and its Applications}.
\newblock Cambridge University Press, 1997.

\bibitem{LothaireAlgebraic}
Lothaire.
\newblock {\em Algebraic Combinatorics on Words}, volume~90 of {\em
  Encyclopedia of Mathematics and its Applications}.
\newblock Cambridge University Press, 2011.

\bibitem{Marcus2004Beatcs}
Solomon Marcus.
\newblock Quasiperiodic infinite words.
\newblock {\em Bull. Eur. Assoc. Theor. Comput. Sci.}, 82:170--174, 2004.

\bibitem{MarcusMonteil2006Arxiv}
Solomon Marcus and Thierry Monteil.
\newblock Quasiperiodic infinite words: multi-scale case and dynamical
  properties.
\newblock Arxiv:math/0603354v1, 2006.

\bibitem{MorseHedlundII}
Marston Morse and Gustav Hedlund.
\newblock Symbolic dynamics {II}: Sturmian trajectories.
\newblock {\em Amer. J. Math.}, 62(1):1--42, 1940.

\end{thebibliography}

\end{document}